% !TEX encoding = UTF-8 Unicode
%%%%
%%%%  Nobutaka Nakazono
%%%%
%%%%  from 11 July  2022
%%%%
\documentclass[a4paper,reqno]{amsart}
\usepackage{amsmath,amssymb,mathrsfs}
\usepackage{txfonts,bm,pifont}
\usepackage{cite}
\usepackage{float}
\usepackage{stackrel}
% 図のcaptionのため
%\usepackage{caption}
\usepackage{subcaption}
\captionsetup[subfigure]{labelfont=rm,format=hang}
\captionsetup[figure]{labelfont=rm,format=hang}

\usepackage{graphicx,xcolor}% これはtikzと相性悪い
\usepackage{comment}
\usepackage{enumitem}
\usepackage{longtable}
\usepackage{hyperref}
%\usepackage[dvipdfmx]{hyperref}
%%%%%%%%%%%%%%%%%%%%%%%%%%%%%%%%%%%%%%%%%%%%%%%%%%%%%%%%
%%% macros
%%%%%%%%%%%%%%%%%%%%%%%%%%%%%%%%%%%%%%%%%%%%%%%%%%%%%%%%

\newtheorem{theorem}{Theorem}[section]

\newtheorem{corollary}[theorem]{Corollary}
\newtheorem{lemma}[theorem]{Lemma}
\newtheorem{remark}[theorem]{Remark}
\theoremstyle{definition}

\numberwithin{equation}{section}
\numberwithin{figure}{section}
\numberwithin{table}{section}
%
% newcommand 
%
% tilde

\newcommand{\wutilde}[1]{\vrule depth 0pt width 0pt%
{\raise0.8pt\hbox{$\smash{{\mathop{#1} \limits_{\displaystyle\widetilde{}}}}$}}}
\newcommand{\wuhat}[1]{\vrule depth 0pt width 0pt%
{\raise0.6pt\hbox{$\smash{{\mathop{#1} \limits_{\displaystyle\widehat{}}}}$}}}
% (operator)

% (symbol)
\newcommand{\al}{\alpha}
\newcommand{\be}{\beta}

\newcommand{\ga}{\gamma}

\newcommand{\la}{\lambda}

\newcommand{\PDE}{P$\Delta$E}
\newcommand{\bbZ}{\mathbb{Z}}

\newcommand{\bbC}{\mathbb{C}}

% (special symbol)

% (other)

\newcommand{\set}[2]{\left\{\left. #1 ~\right|~ #2 \right\}}

%
% newcommand この論文だけの
%

%
\newcommand{\ou}{\overline{u}}
\newcommand{\wu}{\widetilde{u}}
\newcommand{\wou}{\widetilde{\ou}}
\newcommand{\ov}{\overline{v}}
\newcommand{\wv}{\widetilde{v}}
\newcommand{\wov}{\widetilde{\ov}}
\newcommand{\ox}{\overline{x}}
\newcommand{\wx}{\widetilde{x}}
\newcommand{\wox}{\widetilde{\ox}}
\newcommand{\oX}{\overline{X}}
\newcommand{\wX}{\widetilde{X}}
\newcommand{\woX}{\widetilde{\oX}}
\newcommand{\otau}{\overline{\tau}}
\newcommand{\wtau}{\widetilde{\tau}}
\newcommand{\wotau}{\widetilde{\otau}}
\newcommand{\oomega}{\overline{\omega}}
\newcommand{\womega}{\widetilde{\omega}}
\newcommand{\woomega}{\widetilde{\oomega}}
\newcommand{\hu}{\widehat{u}}
\newcommand{\hou}{\widehat{\ou}}
\newcommand{\hwu}{\widehat{\wu}}
\newcommand{\hwou}{\widehat{\wou}}

\newcommand{\hx}{\widehat{x}}
\newcommand{\hox}{\widehat{\ox}}
\newcommand{\hwx}{\widehat{\wx}}

\newcommand{\htau}{\widehat{\tau}}
\newcommand{\hotau}{\widehat{\otau}}
\newcommand{\hwtau}{\widehat{\wtau}}
\newcommand{\hwotau}{\widehat{\wotau}}
\newcommand{\homega}{\widehat{\omega}}
\newcommand{\hoomega}{\widehat{\oomega}}
\newcommand{\hwomega}{\widehat{\womega}}
\newcommand{\hwoomega}{\widehat{\woomega}}
%\usepackage[leqno]{amsmath}
%\makeatletter
%\newcommand{\leqnomode}{\tagsleft@true}%数式番号を左に
%\newcommand{\reqnomode}{\tagsleft@false}%数式番号を右に
%\makeatother
%
% Figure caption. Default Figure 1: --> Figure 1.
%
\makeatletter
\long\def\@makecaption#1#2{
 \vskip 10pt
 \setbox\@tempboxa\hbox{#1. #2}
 \ifdim \wd\@tempboxa >\hsize #1. #2\par \else \hbox
to\hsize{\hfil\box\@tempboxa\hfil}
 \fi}
\makeatother
%
% change the footnote symbol
%

%
% array環境の行間
%

%
% subjclassが1991となる問題解決のため
%
\makeatletter
\@namedef{subjclassname@2020}{%
  \textup{2020} Mathematics Subject Classification}
\makeatother
%%%%%%%%%%%%%%%%%%%%%%%%%%%%%%%%%%%%%%%%%%%%%%%%%%%%%%%
%%% document
%%%%%%%%%%%%%%%%%%%%%%%%%%%%%%%%%%%%%%%%%%%%%%%%%%%%%%%
\begin{document}
\allowdisplaybreaks

%%%%%%%%%%%%%%%%%%%%%%%%%%%%%%%%%%%%%%%%%%%%%%%%%%%%%%%
%%% title
%%%%%%%%%%%%%%%%%%%%%%%%%%%%%%%%%%%%%%%%%%%%%%%%%%%%%%%
\title[]{Consistency around a cube property of Hirota's discrete KdV equation and the lattice sine-Gordon equation}
\author{Nobutaka Nakazono}
\address{Institute of Engineering, Tokyo University of Agriculture and Technology, 2-24-16 Nakacho Koganei, Tokyo 184-8588, Japan.}
\email{nakazono@go.tuat.ac.jp}
%%%%%%%%%%%%%%%%%%%%%%%%%%%%%%%%%%%%%%%%%%%%%%%%%%%%%%%
%% Abstract
%%%%%%%%%%%%%%%%%%%%%%%%%%%%%%%%%%%%%%%%%%%%%%%%%%%%%%%
\begin{abstract}
It has been unknown whether Hirota's discrete Korteweg-de Vries equation and the lattice sine-Gordon equation have the consistency around a cube (CAC) property.
In this paper, we show that they have the CAC property.
Moreover, we also show that they can be extended to systems on the 3-dimensional integer lattice.
\end{abstract}
%%%%%%%%%%%%%%%%%%%%%%%%%%%%%%%%%%%%%%%%%%%%%%%%%%%%%%%
%%%%%%%%%%%%%%%%%%%%%%%%%%%%%%%%%%%%%%%%%%%%%%%%%%%%%%%

\subjclass[2020]{
%33E17, %Painlev\'e-type functions
33E30, % Other functions coming from differential, difference and integral equations
%34M55,% Painlev'e and other special ordinary differential equations in the complex domain; classification, hierarchies
35Q53, %KdV equations (Korteweg-de Vries equations)
37K10, % Completely integrable infinite-dimensional Hamiltonian and Lagrangian systems, integration methods, integrability tests, integrable hierarchies (KdV, KP, Toda, etc.)
%39A13, % Difference equations, scaling ($q$-differences) 
39A14, % Partial difference equations
%39A23, % Periodic solutions of difference equations
39A36, % Integrable difference and lattice equations; integrability tests
39A45% Difference equations in the complex domain
}
\keywords{
discrete integrable systems;
partial difference equation;
Lax pair;
discrete KdV equation;
lattice sine-Gordon equation}
\maketitle

%%%%%%%%%%%%%%%%%%%%%%%%%%%%%%%%%%%%%%%%%%%%%%%%%%%%%%%%%%%
%% 1. Introduction
%%%%%%%%%%%%%%%%%%%%%%%%%%%%%%%%%%%%%%%%%%%%%%%%%%%%%%%%%%%
\section{Introduction}\label{Introduction}
In this paper, we focus on the following two integrable partial difference equations (\PDE s).
One is Hirota's discrete Korteweg-de Vries (dKdV) equation \cite{CNP1991:MR1111648,TGR2001:zbMATH01560028,kajiwara2008bilinearization}:
\begin{equation}\label{eqn:dKdV}%方程式check ok
 u_{l+1,m+1}-u_{l,m}
 =\dfrac{\be_{m+1}-\al_l}{u_{l,m+1}}-\dfrac{\be_m-\al_{l+1}}{u_{l+1,m}},
\end{equation}
where $(l,m)\in\bbZ^2$ is a lattice parameter, $u_{l,m}\in\bbC$ is its function and $\al_l,\be_m\in\bbC$ are parameters that depend only on $l$ and $m$, respectively,
and the other is the lattice sine-Gordon (lsG) equation\cite{KN2018:zbMATH07031158}:
\begin{equation}\label{eqn:lsG}%方程式check ok
 \dfrac{x_{l+1,m+1}}{x_{l,m}}
 =\left(\dfrac{p_{l+1}-q_mx_{l+1,m}}{q_m-p_{l+1}x_{l+1,m}}\right)
 \left(\dfrac{q_{m+1}-p_lx_{l,m+1}}{p_l-q_{m+1}x_{l,m+1}}\right),
\end{equation}
where $(l,m)\in\bbZ^2$ is a lattice parameter, $x_{l,m}\in\bbC$ is its function and $p_l,q_m\in\bbC$ are parameters that depend only on $l$ and $m$, respectively.

The {\PDE} \eqref{eqn:dKdV} is an integrable non-autonomous discrete version of the Korteweg-de Vries (KdV) equation\cite{KDV1895:zbMATH02679684}:
\begin{equation}\label{eqn:KdV}
 w_t+6ww_x+w_{xxx}=0,
\end{equation}
where $(t,x)\in\bbC^2$ and $w=w(t,x)\in\bbC$,
which is known as a mathematical model of waves on shallow water surfaces.
On the other hand, the {\PDE} \eqref{eqn:lsG} is an integrable non-autonomous discrete version of the sine-Gordon equation:
\begin{equation}\label{eqn:sg}
 \phi_{tt}-\phi_{xx}+\sin{\phi}=0,
\end{equation}
where $(t,x)\in\bbC^2$ and $\phi=\phi(t,x)\in\bbC$,
which is known as a motion equation of a row of pendulums hanging from a rod and being coupled by torsion springs.

The consistency around a cube (CAC) property\cite{NQC1983:MR719638,NCWQ1984:MR763123,QNCL1984:MR761644,NS1998:zbMATH01844203,NW2001:MR1869690} is known as an integrability of quad-equations.
(For the CAC property and the definition of quad-equation, see Appendix \ref{subsection:CAC_def}.)
As classifications of quad-equations by using the CAC property, the lists of equations by Adler-Bobenko-Suris (ABS)\cite{ABS2003:MR1962121,ABS2009:MR2503862}, Boll\cite{BollR2011:MR2846098,BollR2012:MR3010833} and Hietarinta \cite{HietarintaJ2019:zbMATH07053246} are known.
These classifications have been done under the following additional conditions:
\begin{enumerate}
\item[\cite{ABS2003:MR1962121}:]
~\\[-1.2em]
All face-equations on the same cube are similar equations differing only by the parameter values and are invariant under the group $D_4$ of the square symmetries.
Moreover, the CAC cube also has the tetrahedron property.
\item[\cite{ABS2009:MR2503862,BollR2011:MR2846098,BollR2012:MR3010833}:]
The face-equations on the same cube are allowed to be different, and no assumption is made about symmetry.
However, the tetrahedron property is still required.
\item[\cite{HietarintaJ2019:zbMATH07053246}:]
No assumptions are made for symmetry or the tetrahedron property, but all face-equations are assumed to be homogeneous quadratic.
The face-equations on the same cube allow different equations on the three orthogonal planes but are the same equations on parallel planes.
\end{enumerate}
Note here that a CAC-cube means a cube with the six quad-equations on its faces (face-equations) which has the CAC property.
However, the \PDE s \eqref{eqn:dKdV} and \eqref{eqn:lsG} are not included in these lists, and the structures of their consistency have not been reported.

Recently, it has been shown that the \PDE s \eqref{eqn:dKdV} and \eqref{eqn:lsG} have the consistency around a broken cube (CABC) property\cite{JN2021:zbMATH07476241,nakazono2022properties}.
(For the CABC property, see Appendix \ref{subsection:CABC_def}.)
By considering the B\"acklund transformation (BT)
\begin{subequations}\label{eqns:dKdV_CABC_1}%方程式check ok
\begin{align}
%A
% &\wou-u_{l,m}-\dfrac{\be_{m+1}-\al_l}{\wu}+\dfrac{\be_m-\al_{l+1}}{u_{l+1,m}}=0,
% \label{eqn:dKdV_CABC_1_A}\\
%S
 &u_{l,m}-v_{l+1,m+1}-\dfrac{\be_m-\al_{l+1}}{u_{l+1,m}}+\dfrac{\ga-\al_l}{v_{l,m+1}}=0,
 \label{eqn:dKdV_CABC_1_S}\\
%B
 &(u_{l,m}-v_{l,m})\left(\dfrac{\be_m-\al_l}{u_{l,m}}+v_{l,m+1}\right)-\be_m+\ga=0,
 \label{eqn:dKdV_CABC_1_B}\\
%C
 &\dfrac{\be_m-\al_l}{u_{l,m}}-\dfrac{\ga-\al_l}{v_{l,m}}-u_{l+1,m}+v_{l+1,m}=0,
 \label{eqn:dKdV_CABC_1_C}
\end{align}
\end{subequations}
 from the {\PDE} \eqref{eqn:dKdV} to the multi-quadratic equation
\begin{equation}\label{eqn:multiquadratic}%方程式check ok
\begin{split}
 &\be_m\left(v_{l,m}\,v_{l+1,m}-v_{l,m+1}\,v_{l+1,m+1}\right)^2\\
 &\quad +\bigl(v_{l,m}-v_{l+1,m+1}\bigr)\bigl(v_{l+1,m}-v_{l,m+1}\bigr)\bigl(\ga-v_{l,m}\,v_{l+1,m}\bigr)\bigl(\ga-v_{l,m+1}\,v_{l+1,m+1}\bigr)\\
 &\quad +\bigl(\al_{l+1}v_{l,m}-\al_lv_{l+1,m+1}\bigr)\bigl(\al_lv_{l+1,m}-\al_{l+1}v_{l,m+1}\bigr)\\
 &\quad -(\al_l+\al_{l+1})\bigl(\ga\,v_{l,m}\,v_{l+1,m}+\ga\,v_{l,m+1}\,v_{l+1,m+1}-2\,v_{l,m}\,v_{l+1,m}\,v_{l,m+1}\,v_{l+1,m+1}\bigr)\\
 &\quad+\bigl(\al_{l+1}\,v_{l,m}\,v_{l,m+1}+\al_l\,v_{l+1,m}\,v_{l+1,m+1}\bigr)\bigl(2\ga - v_{l,m} v_{l+1,m}-v_{l,m+1} \,v_{l+1,m+1}\bigr) =0,
\end{split}
\end{equation}
where $\ga\in\bbC$ is a constant parameter, the CABC property of the {\PDE} \eqref{eqn:dKdV} was shown in \cite{JN2021:zbMATH07476241}.
Similarly, by considering the following BT from the {\PDE} \eqref{eqn:lsG} to Equation \eqref{eqn:multiquadratic}:
\begin{subequations}\label{eqns:lsG_CABC}%方程式check ok
\begin{align}
%A
% &\dfrac{x_{l+1,m+1}}{x_{l,m}}-\left(\dfrac{p_{l+1}-q_mx_{l+1,m}}{q_m-p_{l+1}x_{l+1,m}}\right)\left(\dfrac{q_{m+1}-p_lx_{l,m+1}}{p_l-q_{m+1}x_{l,m+1}}\right)=0,
% \label{eqn:lsG_CABC_A}\\
%S
 &\dfrac{v_{l,m+1}v_{l+1,m+1}}{p_lp_{l+1}}-\dfrac{x_{l,m}(p_{l+1}-q_m x_{l+1,m})}{p_{l+1}x_{l+1,m}-q_m}=0,
 \label{eqn:lsG_CABC_S}\\
%B
 &\dfrac{v_{l,m}v_{l,m+1}}{{p_l}^2}-\dfrac{x_{l,m}(p_l-q_mx_{l,m})}{p_lx_{l,m}-q_m}=0,
 \label{eqn:lsG_CABC_B}\\
%C
 &\dfrac{v_{l,m}v_{l+1,m}}{p_lp_{l+1}}-\dfrac{x_{l+1,m}(p_l-q_mx_{l,m})}{p_lx_{l,m}-q_m}=0,
 \label{eqn:lsG_CABC_C}
\end{align}
\end{subequations}
the CABC property of the {\PDE} \eqref{eqn:lsG} was shown in \cite{nakazono2022properties}.
In this case, the relation between the parameters $p_l$, $q_m$, $\al_l$, $\be_m$ and $\ga$ is given by
\begin{equation}\label{eqn:pq_albe_ga}%方程式check ok
 p_l=\sqrt{\al_l-\ga},\quad
 q_m=\sqrt{\be_m-\ga}.
\end{equation}
Since having a CABC property does not imply not having a CAC property, the CABC property of the \PDE s \eqref{eqn:dKdV} and \eqref{eqn:lsG} was shown, but whether they have the CAC property or not is an open problem.
It is natural to consider that the CAC property can be shown using the CABC property because both are meant to show integrability and are the properties around a cube.
This is the motivation for this study.

%%%%%%%%%%%%%%%%%%%%%%%%%%%%%%
%% Remark
%%%%%%%%%%%%%%%%%%%%%%%%%%%%%%
\begin{remark}
The first condition in Equation \eqref{eqn:pq_albe_ga} means that given the complex parameters $\al_l$ and $\ga$, we determine one value of the complex parameter $p_l$ that satisfies 
\begin{equation}
 {p_l}^2=\al_l-\ga.
\end{equation}
The same is true for the second condition in Equation \eqref{eqn:pq_albe_ga}.
In addition, if the root symbol $\sqrt{~\,~}$ is used in the following, it means the same thing.
\end{remark}
%%%%%%%%%%%%%%%%%%%%%%%%%%%%%%
%%%%%%%%%%%%%%%%%%%%%%%%%%%%%%

In this paper, we introduce another CABC-type BT from the {\PDE} \eqref{eqn:dKdV} to Equation \eqref{eqn:multiquadratic}, which is different from the CABC-type BT \eqref{eqns:dKdV_CABC_1}. 
Then, using a total of three CABC-type BTs, we show the following results:
\begin{enumerate}
\item[(i)] 
derive two CAC-type BTs from the {\PDE} \eqref{eqn:dKdV} to the {\PDE} \eqref{eqn:lsG};
\item[(ii)]  
derive a CAC-type auto-B\"acklund transformation (auto-BT) of the {\PDE} \eqref{eqn:lsG};
\item[(iii)] 
construct Lax pairs of the \PDE s \eqref{eqn:dKdV} and \eqref{eqn:lsG};
\item[(iv)] 
derive integrable systems of \PDE s on the 3-dimensional integer lattice containing the \PDE s \eqref{eqn:dKdV} and \eqref{eqn:lsG}.
\end{enumerate}

The main contribution of this paper is not only to show that the systems \eqref{eqn:dKdV} and \eqref{eqn:lsG} have the CAC property but also to find examples of quad-equations having the CAC property that do not satisfy the conditions for classifying the list of equations by ABS, Boll and Hietarinta above.
Finding such examples leads to the study of making new lists of \PDE s that have the CAC property, i.e., it contributes to the development of the study of integrable systems.

%%%%%%%%%%%%%%%%%%%%%%%%%%%%%%%%%%%%%%%%%%%%%%%%%%%%%%%%%%%
%%% 1.1 Notation and Terminology
%%%%%%%%%%%%%%%%%%%%%%%%%%%%%%%%%%%%%%%%%%%%%%%%%%%%%%%%%%%
\subsection{Notation and Terminology}\label{subsection:NT}
\begin{itemize}
\item 
For simplicity, we use the following shorthand notations:
\begin{equation}
 \overline{\rule{0em}{0.5em}\hspace{0.6em}}:\,l\to l+1,\quad
 \widetilde{}\,:\,m\to m+1,\quad
 \widehat{}\,:\,n\to n+1.
\end{equation}
For example, 
\begin{equation}
 \ou=u_{l+1,m},\quad
 \wu=u_{l,m+1},\quad
 \wou=u_{l+1,m+1},\quad
 \overline{\ou}=u_{l+2,m},\quad
 \widetilde{\wu}=u_{l,m+2}
\end{equation}
for $u=u_{l,m}$ and 
\begin{equation}
 \ou=u_{l+1,m,n},\quad
 \wu=u_{l,m+1,n},\quad
 \hu=u_{l,m,n+1},\quad
 \hwou=u_{l+1,m+1,n+1}
\end{equation}
for $u=u_{l,m,n}$, and the like.

\item
If there is a substitution notation (e.g., $l\to l+1$) in the subscript of an equation number, it means the equation with the substitution applied to all symbols in the corresponding equation.
For example, \eqref{eqn:dKdV}$_{l\to l+1}$ implies the following equation:
\[%方程式check ok
 u_{l+2,m+1}-u_{l+1,m}
 =\dfrac{\be_{m+1}-\al_{l+1}}{u_{l+1,m+1}}-\dfrac{\be_m-\al_{l+2}}{u_{l+2,m}}.
\]

\item
The following terminologies are used in this paper.
\begin{enumerate}
\item 
A multivariate polynomial with complex coefficients that is affine linear with respect to each variable is called a {\it multilinear polynomial}.
For example, the general form of a multilinear polynomial in four variables is given by
\begin{align}
 &A_1xyzw+A_2xyz+A_3xyw+A_4xzw+A_5yzw+A_6xy+A_7xz+A_8xw\notag\\
 &\quad +A_9yz+A_{10}yw+A_{11}zw+A_{12}x+A_{13}y+A_{14}z+A_{15}w+A_{16},
\end{align}
where $A_i\in\bbC$ are parameters.
\item 
Let $Q(x,y,z,w)$ be a multilinear polynomial and $x=f(y,z,w)$ be the solution of $Q(x,y,z,w)=0$.
We call $Q(x,y,z,w)$ an {\it irreducible multilinear polynomial} in four variables if the following holds:
\begin{equation}
 \dfrac{\partial}{\partial y} f(y,z,w)\neq0,\quad
 \dfrac{\partial}{\partial z} f(y,z,w)\neq0,\quad
 \dfrac{\partial}{\partial w} f(y,z,w)\neq0.
\end{equation}
\item
Consider a {\PDE} of the form 
\begin{equation}\label{eqn:general_quad_eqn}
 Q(u,\ou,\wu,\wou\,)=0,
\end{equation}
where $u=u_{l,m}$ and $Q$ is an irreducible multilinear polynomial. 
By putting the variable $u_{l,m}$ on the grid point $(l,m)$ of the integer lattice $\bbZ^2$,
the {\PDE} \eqref{eqn:general_quad_eqn} can be regarded as a relation on the quadrilateral given by
\[
 (l,m),\quad (l+1,m),\quad (l,m+1),\quad (l+1,m+1)
\]
as shown in Figure \ref{fig:quadrilateral}.
For this reason, a {\PDE} of the form \eqref{eqn:general_quad_eqn} is called a {\it quad-equation}.
Note that quad-equations are not necessarily autonomous but may contain parameters that evolve with $l$ or $m$.
\end{enumerate}
\end{itemize}

%%%%%%%%%%%%%%%%%%%%%%%%%%%%%%
%%% Figure 
%%%%%%%%%%%%%%%%%%%%%%%%%%%%%%
\begin{figure}[htbp]
\begin{center}
 \includegraphics[width=0.35\textwidth]{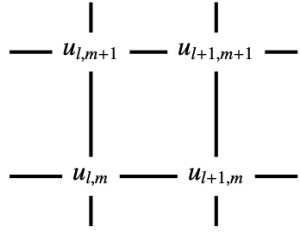}
\end{center}
\caption{A quadrilateral for the quad-equation \eqref{eqn:general_quad_eqn}.}
\label{fig:quadrilateral}
\end{figure}
%%%%%%%%%%%%%%%%%%%%%%%%%%%%%%
%%%%%%%%%%%%%%%%%%%%%%%%%%%%%%

%%%%%%%%%%%%%%%%%%%%%%%%%%%%%%%%%%%%%%%%%%%%%%%%%%%%%%%%%%%
%%% 1.2 Plan of the paper
%%%%%%%%%%%%%%%%%%%%%%%%%%%%%%%%%%%%%%%%%%%%%%%%%%%%%%%%%%%
\subsection{Plan of the paper}
This paper is organized as follows.
In \S \ref{section:dKdV_lsG_consistency}, using the CABC property of the dKdV equation \eqref{eqn:dKdV} and the lsG equation \eqref{eqn:lsG}, we show that they have the CAC property.
In \S \ref{section:3D}, we show how the \PDE s \eqref{eqn:dKdV} and \eqref{eqn:lsG} can be extended to systems of \PDE s on the 3-dimensional integer lattice.
Some concluding remarks are given in \S \ref{ConcludingRemarks}.
In Appendix \ref{section:CAC_CABC}, we recall the definitions of the CAC and CABC properties.
In Appendix \ref{section:Lax}, we list the Lax pairs of the \PDE s \eqref{eqn:dKdV} and \eqref{eqn:lsG} constructed in this study.
In Appendix \ref{section:proof_tau}, we give the proof of Theorem \ref{theorem:tau}.
%%%%%%%%%%%%%%%%%%%%%%%%%%%%%%%%%%%%%%%%%%%%%%%%%%%%%%%%%%%
%% 2. Consistencies of the dKdV equation and the lsG equation
%%%%%%%%%%%%%%%%%%%%%%%%%%%%%%%%%%%%%%%%%%%%%%%%%%%%%%%%%%%
\section{Consistencies of the dKdV equation and the lsG equation}\label{section:dKdV_lsG_consistency}
In this section, we give the new consistencies of the dKdV equation \eqref{eqn:dKdV} and the lsG equation \eqref{eqn:lsG}.
See Appendix \ref{section:CAC_CABC} for descriptions of the CABC and CAC properties we deal with here.

%%%%%%%%%%%%%%%%%%%%%%%%%%%%%%%%%%%%%%%%%%%%%%%%%%%%%%%%%%%
%%% 2.1 A CABC-type BT from the {\PDE} \eqref{eqn:dKdV} to the multi-quadratic equation \eqref{eqn:multiquadratic}
%%%%%%%%%%%%%%%%%%%%%%%%%%%%%%%%%%%%%%%%%%%%%%%%%%%%%%%%%%%
\subsection{A CABC-type BT from the {\PDE} \eqref{eqn:dKdV} to the multi-quadratic equation \eqref{eqn:multiquadratic}}\label{subsection:dKdV_CABC}
The following system of equations is a new CABC-type BT from the {\PDE} \eqref{eqn:dKdV} to Equation \eqref{eqn:multiquadratic}:
\begin{subequations}\label{eqns:dKdV_CABC_2}%方程式check ok
\begin{align}
%A
% &\wou-u-\dfrac{\be_{m+1}-\al_l}{\wu}+\dfrac{\be_m-\al_{l+1}}{\ou}=0,
% \label{eqn:dKdV_CABC_2_A}\\
%S
 &u-\wv-\dfrac{\be_m-\al_{l+1}}{\ou}+\dfrac{\ga-\al_{l+1}}{\wov}=0,
 \label{eqn:dKdV_CABC_2_S}\\
%B
 &(u-\wv)\left(\dfrac{\be_m-\al_l}{u}+v\right)-\be_m+\ga=0,
 \label{eqn:dKdV_CABC_2_B}\\
%C
 &\dfrac{\be_m-\al_l}{u}-\dfrac{\ga-\al_{l+1}}{\ov}-\ou+v=0,
 \label{eqn:dKdV_CABC_2_C}
\end{align}
\end{subequations}
where $u=u_{l,m}$ and $v=v_{l,m}$.
By direct calculation, we can verify that the system of equations \eqref{eqn:dKdV} and \eqref{eqns:dKdV_CABC_2} has the CABC and tetrahedron properties.
Indeed, the following relations can be derived from the CABC-system \eqref{eqn:dKdV} and \eqref{eqns:dKdV_CABC_2}:
\begin{equation}%方程式check ok
 \dfrac{\al_{l+1}-\ga}{\wov}=\dfrac{\al_{l+1}-\be_m}{\ou}-\dfrac{(\be_m-\ga)u}{\al_l-\be_m-uv},\quad
 \wv=u+\dfrac{(\be_m-\ga)\ov}{\al_{l+1}-\ga-\ou\,\ov}.
\end{equation}
As shown in \cite{JN2021:zbMATH07476241,nakazono2022properties}, the Lax pair \eqref{eqn:Lax_phi} with \eqref{eqn:Lax_dKdV_CABC} of the {\PDE} \eqref{eqn:dKdV} can be constructed from the equations \eqref{eqn:dKdV_CABC_2_B} and \eqref{eqn:dKdV_CABC_2_C} by setting
\begin{equation}
 v_{l,m}=\dfrac{F_{l,m}}{G_{l,m}},\quad
 \Phi_{l,m}=\begin{pmatrix}F_{l,m}\\G_{l,m}\end{pmatrix}.
\end{equation}
%%%%%%%%%%%%%%%%%%%%%%%%%%%%%%%%%%%%%%%%%%%%%%%%%%%%%%%%%%%
%%% 2.2 A CAC-type BT from the {\PDE} \eqref{eqn:dKdV} to the {\PDE} \eqref{eqn:lsG} (I)
%%%%%%%%%%%%%%%%%%%%%%%%%%%%%%%%%%%%%%%%%%%%%%%%%%%%%%%%%%%
\subsection{A CAC-type BT from the {\PDE} \eqref{eqn:dKdV} to the {\PDE} \eqref{eqn:lsG} (I)}\label{subsection:CABC_to_CAC_1}
In this subsection, using the CABC-type BTs \eqref{eqns:dKdV_CABC_1} and \eqref{eqns:lsG_CABC}, we obtain a CAC-type BT from the {\PDE} \eqref{eqn:dKdV} to the {\PDE} \eqref{eqn:lsG}.

Eliminating $\wv$ from the equations \eqref{eqn:dKdV_CABC_1_B} and \eqref{eqn:lsG_CABC_B}, we obtain 
\begin{equation}\label{eqn:v_ux_CABC_1}%方程式check ok
 \Big((q_m-p_lx)v+p_lux\Big)
 \Big(({p_l}^2-{q_m}^2)v-p_lu(p_l-q_mx)\Big)
 =0.
\end{equation}
Let us assume that the following holds:
\begin{equation}\label{eqn:v_ux_CABC_2}%方程式check ok
 v=\dfrac{p_lux}{p_lx-q_m},
\end{equation}
which satisfies Equation \eqref{eqn:v_ux_CABC_1}.
Eliminating $v$ from Equation \eqref{eqn:dKdV_CABC_1_B} by using Equation \eqref{eqn:v_ux_CABC_2}
and $\wv$ from Equation \eqref{eqn:dKdV_CABC_1_C} by using Equation \eqref{eqn:v_ux_CABC_2}$_{m\to m+1}$, 
we obtain
\begin{subequations}%方程式check ok
\begin{align}
 &\wu u+\dfrac{(p_l-q_mx)(q_{m+1}-p_l\wx)}{\wx}=0,\\
 &\ou u+\dfrac{(p_l-q_mx)(q_m-p_{l+1}\ox)}{x}=0,
\end{align}
\end{subequations}
respectively.
The equations above, together with the \PDE s \eqref{eqn:dKdV} and \eqref{eqn:lsG}, lead to the following theorem.

%%%%%%%%%%%%%%%%%%%%%%%%%%%%%%
%% Theorem
%%%%%%%%%%%%%%%%%%%%%%%%%%%%%%
\begin{theorem}\label{thm:CAC1}
The following system has the CAC property but does not have the tetrahedron property:
\begin{subequations}\label{eqns:dKdV_lsG_CAC_1}%方程式check ok
\begin{align}
 &\wou-u=\dfrac{\be_{m+1}-\al_l}{\wu}-\dfrac{\be_m-\al_{l+1}}{\ou},
 \label{eqn:dKdV_lsG_CAC_1_A}\\
 &\wu u+\dfrac{(p_l-q_mx)(q_{m+1}-p_l\wx)}{\wx}=0,
 \label{eqn:dKdV_lsG_CAC_1_B}\\
 &\ou u+\dfrac{(p_l-q_mx)(q_m-p_{l+1}\ox)}{x}=0,
 \label{eqn:dKdV_lsG_CAC_1_C}\\
 &\dfrac{\,\wox\,}{x}
 =\left(\dfrac{p_{l+1}-q_m\ox}{q_m-p_{l+1}\ox}\right)
 \left(\dfrac{q_{m+1}-p_l\wx}{p_l-q_{m+1}\wx}\right),
 \label{eqn:dKdV_lsG_CAC_1_AA}
\end{align}
\end{subequations}
where $p_l$ and $q_m$ are given by Equation \eqref{eqn:pq_albe_ga}.
Since the equations \eqref{eqn:dKdV_lsG_CAC_1_A} and \eqref{eqn:dKdV_lsG_CAC_1_AA} are equal to the \PDE s \eqref{eqn:dKdV} and \eqref{eqn:lsG}, respectively, we can also say that the pair of equations \eqref{eqn:dKdV_lsG_CAC_1_B} and \eqref{eqn:dKdV_lsG_CAC_1_C} is a CAC-type BT from the {\PDE} \eqref{eqn:dKdV} to the {\PDE} \eqref{eqn:lsG}.
\end{theorem}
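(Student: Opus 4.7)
The plan is to place the eight cube vertices on a three-dimensional lattice, taking the B\"acklund direction as the third axis: the bottom face $(l,m)$, $(l+1,m)$, $(l,m+1)$, $(l+1,m+1)$ carries the values $u$, $\ou$, $\wu$, $\wou$ and equation \eqref{eqn:dKdV_lsG_CAC_1_A}; the top face carries $x$, $\ox$, $\wx$, $\wox$ and equation \eqref{eqn:dKdV_lsG_CAC_1_AA}; the four vertical faces carry \eqref{eqn:dKdV_lsG_CAC_1_B}, \eqref{eqn:dKdV_lsG_CAC_1_C} and their shifts $l\to l+1$ and $m\to m+1$. Given initial data $u$, $\ou$, $\wu$, $x$, each of \eqref{eqn:dKdV_lsG_CAC_1_A}, \eqref{eqn:dKdV_lsG_CAC_1_B}, \eqref{eqn:dKdV_lsG_CAC_1_C} is affine-linear in a single new vertex value and solves uniquely for $\wou$, $\wx$, $\ox$, respectively. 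The far corner $\wox$ must then be computed in three ways: from \eqref{eqn:dKdV_lsG_CAC_1_AA} using $(x,\ox,\wx)$; from \eqref{eqn:dKdV_lsG_CAC_1_B}$_{l\to l+1}$ using $(\ou,\wou,\ox)$; and from \eqref{eqn:dKdV_lsG_CAC_1_C}$_{m\to m+1}$ using $(\wu,\wou,\wx)$. Verifying the coincidence of these three rational expressions is precisely the CAC condition.

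The cleanest route to organising this coincidence is to lift back to the auxiliary variable $v$ via \eqref{eqn:v_ux_CABC_2}: attaching $v$, $\ov$, $\wv$, $\wov$ to the side vertices turns the cube into a simultaneous CABC-structure for both the dKdV system \eqref{eqns:dKdV_CABC_1} (already proved in \cite{JN2021:zbMATH07476241}) and the lsG system \eqref{eqns:lsG_CABC} (already proved in \cite{nakazono2022properties}), both governed by the common multi-quadratic equation \eqref{eqn:multiquadratic}. Because \eqref{eqn:v_ux_CABC_2} was introduced precisely to couple the two towers, the $v$-values at every side vertex are determined unambiguously, and eliminating $v$ yields the same $\wox$ along all three computational routes. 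The main obstacle I anticipate is verifying this double compatibility of \eqref{eqn:v_ux_CABC_2} cleanly at all four horizontal positions without circular reasoning; if that reduction proves unwieldy, the fallback is a direct rational-function check, writing each candidate for $\wox$ explicitly in the data $(u,\ou,\wu,x,\al_l,\al_{l+1},\be_m,\be_{m+1})$ via \eqref{eqn:pq_albe_ga} and confirming equality by elementary algebra, possibly computer-assisted.

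For the failure of the tetrahedron property, I would substitute the explicit expressions for $\wou$, $\ox$, $\wx$ and the common value of $\wox$ into the two candidate tetrahedron relations on the quadruples $\{u,\wou,\ox,\wx\}$ and $\{\ou,\wu,x,\wox\}$, and show that neither quadruple satisfies a polynomial identity that is independent of the complementary vertices. Since $\wou$, $\ox$, and $\wx$ depend on three different subsets of $\{\ou,\wu,x\}$, any relation containing only these together with $u$ would impose a syzygy that does not hold generically; exhibiting a single tuple of initial data and parameters at which any proposed tetrahedron equation breaks settles the non-tetrahedron assertion, and the interpretation of \eqref{eqn:dKdV_lsG_CAC_1_B}--\eqref{eqn:dKdV_lsG_CAC_1_C} as a BT from \eqref{eqn:dKdV} to \eqref{eqn:lsG} is then an immediate consequence of the CAC structure just established.
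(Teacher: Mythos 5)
Your fallback route is exactly the paper's proof: the author computes $\wox$ in the three ways prescribed in Appendix \ref{subsection:CAC_def} (from \eqref{eqn:dKdV_lsG_CAC_1_AA} with $(x,\ox,\wx)$, from \eqref{eqn:dKdV_lsG_CAC_1_B}$_{l\to l+1}$, and from \eqref{eqn:dKdV_lsG_CAC_1_C}$_{m\to m+1}$), verifies by direct (computer-assisted) algebra that all three coincide, and records the resulting closed form
\begin{equation*}
 \wox=\dfrac{q_{m+1}\wu\,\bigl((\al_{l+1}-\be_m)(p_l-q_mx)-q_m u\ou x\bigr)}{p_{l+1}\ou\,\bigl((\al_l-\be_{m+1})(p_l-q_mx)-p_l u\wu\bigr)}.
\end{equation*}
The failure of the tetrahedron property is then immediate from the paper's definition: the displayed expression visibly depends on $u$, so no separate ``syzygy'' analysis or counterexample hunt is needed --- your plan for that part is more elaborate than necessary, though not wrong. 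One caution about your preferred primary route: the lift back to $v$ via \eqref{eqn:v_ux_CABC_2} cannot by itself establish CAC on purely structural grounds, because the consistency genuinely depends on which branch of the quadratic \eqref{eqn:v_ux_CABC_1} is chosen --- the paper remarks explicitly that the other root $v=p_lu(p_l-q_mx)/({p_l}^2-{q_m}^2)$ produces a system \emph{without} the CAC property. So the double-CABC picture explains where the system comes from, but the coincidence of the three expressions for $\wox$ must still be checked directly, which is what the paper (and your fallback) does.
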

%%%%%%%%%%%%%%%%%%%%%%%%%%%%%%
% proof
%%%%%%%%%%%%%%%%%%%%%%%%%%%%%%
\begin{proof}
Consider the cube in Figure \ref{fig:cube_ux}.
In the three different ways (see Appendix \ref{subsection:CAC_def}), $\wox$ can be uniquely represented by the initial values $\{u,\ou,\wu,x\}$ as
\begin{equation}%方程式check ok
 \wox=\dfrac{q_{m+1}\wu\,\Big((\al_{l+1}-\be_m)(p_l-q_mx)-q_m u\ou x\Big)}{p_{l+1}\ou\,\Big((\al_l-\be_{m+1})(p_l-q_mx)-p_l u\wu\Big)}.
\end{equation}
Since $\wox$ depends on $u$, the tetrahedron property does not hold.
Therefore, we have completed the proof.
\end{proof}
%%%%%%%%%%%%%%%%%%%%%%%%%%%%%%
%%%%%%%%%%%%%%%%%%%%%%%%%%%%%%

Theorem \ref{thm:CAC1} gives the following corollary.

%%%%%%%%%%%%%%%%%%%%%%%%%%%%%%
%% Corollary
%%%%%%%%%%%%%%%%%%%%%%%%%%%%%%
\begin{corollary}
Both of the \PDE s \eqref{eqn:dKdV} and \eqref{eqn:lsG} have the CAC property.
\end{corollary}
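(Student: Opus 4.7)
The plan is to read off the Corollary as an immediate consequence of Theorem \ref{thm:CAC1}, by identifying which faces of the CAC-cube in Figure \ref{fig:cube_ux} carry the dKdV and lsG equations. So I would not attempt any fresh consistency computation; rather, I would package the already-established CAC property of the composite system \eqref{eqns:dKdV_lsG_CAC_1} so as to read off CAC for each of the two constituent PDEs.

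Concretely, first I would recall the definition of the CAC property of a single quad-equation (Appendix \ref{subsection:CAC_def}): an equation $Q(u,\ou,\wu,\wou)=0$ has CAC if it can be placed on a face of some cube whose remaining five face-equations together ensure single-valuedness of $\hwou$ from the initial data $\{u,\ou,\wu,\hu\}$. Then I would note that Theorem \ref{thm:CAC1} asserts exactly this for the cube in Figure \ref{fig:cube_ux}, whose six face-equations are \eqref{eqn:dKdV_lsG_CAC_1_A}, its $n$-shift, \eqref{eqn:dKdV_lsG_CAC_1_B} and its $l$-shift, \eqref{eqn:dKdV_lsG_CAC_1_C} and its $m$-shift, and \eqref{eqn:dKdV_lsG_CAC_1_AA}. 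The key observation is then the syntactic one highlighted in the statement of Theorem \ref{thm:CAC1}: Equation \eqref{eqn:dKdV_lsG_CAC_1_A} is literally the dKdV equation \eqref{eqn:dKdV}, and Equation \eqref{eqn:dKdV_lsG_CAC_1_AA} is literally the lsG equation \eqref{eqn:lsG}.

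Hence, taking the bottom face of the cube to be \eqref{eqn:dKdV} realises the dKdV equation as a face of a CAC-cube, establishing the CAC property for \eqref{eqn:dKdV}; taking the top face to be \eqref{eqn:lsG} does the same for \eqref{eqn:lsG}. There is no essential obstacle here, since the whole point of Theorem \ref{thm:CAC1} was to set up a single cube that simultaneously displays both equations. The only mild point to be careful about is that the definition of CAC used in the paper allows the six face-equations to differ (this is why the ABS/Boll/Hietarinta frameworks were discussed at the start), so having dKdV on one face and lsG on the opposite face is perfectly acceptable and the deduction goes through verbatim.
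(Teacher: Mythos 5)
Your argument is correct and is exactly the paper's (implicit) proof: the corollary is read off from Theorem \ref{thm:CAC1} because \eqref{eqn:dKdV_lsG_CAC_1_A} and \eqref{eqn:dKdV_lsG_CAC_1_AA} are literally the dKdV and lsG equations sitting on the bottom and top faces of a CAC-cube, and the definition in Appendix \ref{subsection:CAC_def} attributes the CAC property to any quad-equation occupying those faces. The only blemish is your enumeration of the six face-equations (you list seven, including a spurious ``$n$-shift'' of the bottom face, whereas the top face is the lsG equation itself), but this does not affect the deduction.
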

%%%%%%%%%%%%%%%%%%%%%%%%%%%%%%
%%%%%%%%%%%%%%%%%%%%%%%%%%%%%%

%%%%%%%%%%%%%%%%%%%%%%%%%%%%%%
%%% Figure 
%%%%%%%%%%%%%%%%%%%%%%%%%%%%%%
\begin{figure}[htbp]
\begin{center}
 \includegraphics[width=0.45\textwidth]{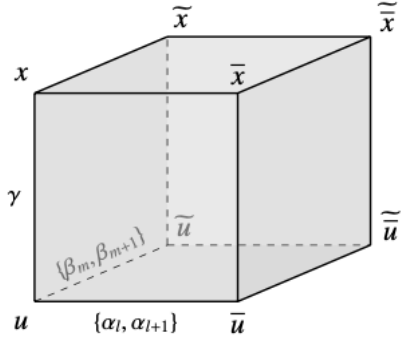}
\end{center}
\caption{
A cube for the CAC-systems \eqref{eqns:dKdV_lsG_CAC_1} and \eqref{eqns:dKdV_lsG_CAC_2}.
The $u$- and $x$-variables are assigned at the bottom and top vertices, respectively.
The corresponding equations of a CAC-system are assigned to the faces, e.g.
the equations \eqref{eqn:dKdV_lsG_CAC_1_A} and \eqref{eqn:dKdV_lsG_CAC_1_AA} are assigned to the bottom and top faces, respectively.
Also, as a convention, the parameters in a CAC-system are placed on the line segments.
Thus, the parameters $\{\al_l,\al_{l+1}\}$ are assigned to the line segments connecting the variables that represent a shift in the $l$-direction (e.g., $u$ and $\ou$) since $\{\al_l,\al_{l+1}\}$ depend on only $l$.
Similarly, the parameters $\{\be_m,\be_{m+1}\}$ are assigned to the line segments connecting the variables that represent a shift in the $m$-direction.
On the other hand, the parameter $\ga$ is assigned to the line segments connecting the $u$- and $x$-varibales (e.g., $u$ and $x$) since $\ga$ is a parameter associated with the BT from the quad-equation given by $u$-variable to that given by $x$-variable.
For simplicity, only one parameter is shown for each in the figure, but they can also be placed on three other parallel line segments.
For example, $\ga$ can also be placed on the line segment connecting $\ou$ and $\ox$.
}
\label{fig:cube_ux}
\end{figure}
%%%%%%%%%%%%%%%%%%%%%%%%%%%%%%
%%%%%%%%%%%%%%%%%%%%%%%%%%%%%%

%%%%%%%%%%%%%%%%%%%%%%%%%%%%%%
%% Remark
%%%%%%%%%%%%%%%%%%%%%%%%%%%%%%
\begin{remark}
If we choose 
\begin{equation}%方程式check ok
 v=\dfrac{p_lu(p_l-q_mx)}{{p_l}^2-{q_m}^2}
\end{equation}
as the solution of Equation \eqref{eqn:v_ux_CABC_1} instead of Equation \eqref{eqn:v_ux_CABC_2} and construct another system of \PDE s, then we can verify that the resulting system does not have CAC property.
\end{remark}
%%%%%%%%%%%%%%%%%%%%%%%%%%%%%%
%%%%%%%%%%%%%%%%%%%%%%%%%%%%%%

As shown in \cite{BS2002:MR1890049,HJN2016:MR3587455,NijhoffFW2002:MR1912127,WalkerAJ:thesis}, the Lax pair \eqref{eqn:Lax_phi} with \eqref{eqn:Lax_dKdV_CAC1} of the {\PDE} \eqref{eqn:dKdV} can be constructed from the equations \eqref{eqn:dKdV_lsG_CAC_1_B} and \eqref{eqn:dKdV_lsG_CAC_1_C} by setting
\begin{equation}
 x_{l,m}=\dfrac{F_{l,m}}{G_{l,m}},\quad
 \Phi_{l,m}=\begin{pmatrix}F_{l,m}\\G_{l,m}\end{pmatrix}.
\end{equation}
 
 %%%%%%%%%%%%%%%%%%%%%%%%%%%%%%
%% Remark
%%%%%%%%%%%%%%%%%%%%%%%%%%%%%%
\begin{remark}
Similar to the Lax pair of the {\PDE} \eqref{eqn:dKdV}, 
we can also construct a Lax pair of the {\PDE} \eqref{eqn:lsG} from the equations \eqref{eqn:dKdV_lsG_CAC_1_B} and \eqref{eqn:dKdV_lsG_CAC_1_C} by setting 
\begin{equation}
 u_{l,m}=\dfrac{F_{l,m}}{G_{l,m}},\quad
 \Phi_{l,m}=\begin{pmatrix}F_{l,m}\\G_{l,m}\end{pmatrix}.
\end{equation}
However, we still have not figured out how to introduce a spectral parameter in this case.
\end{remark}
%%%%%%%%%%%%%%%%%%%%%%%%%%%%%%
%%%%%%%%%%%%%%%%%%%%%%%%%%%%%%
 
 %%%%%%%%%%%%%%%%%%%%%%%%%%%%%%%%%%%%%%%%%%%%%%%%%%%%%%%%%%%
%%% 2.3 A CAC-type BT from the {\PDE} \eqref{eqn:dKdV} to the {\PDE} \eqref{eqn:lsG} (II)
%%%%%%%%%%%%%%%%%%%%%%%%%%%%%%%%%%%%%%%%%%%%%%%%%%%%%%%%%%%
\subsection{A CAC-type BT from the {\PDE} \eqref{eqn:dKdV} to the {\PDE} \eqref{eqn:lsG} (II)}\label{subsection:CABC_to_CAC_2}
In this subsection, using the CABC-type BTs \eqref{eqns:lsG_CABC} and \eqref{eqns:dKdV_CABC_2}, we obtain a CAC-type BT from the {\PDE} \eqref{eqn:dKdV} to the {\PDE} \eqref{eqn:lsG}.
Since the process for demonstrating the result is the same as that for the CABC-type BTs \eqref{eqns:dKdV_CABC_1} and \eqref{eqns:lsG_CABC} discussed in \S \ref{subsection:CABC_to_CAC_1}, we omit detailed arguments.

%%%%%%%%%%%%%%%%%%%%%%%%%%%%%%
%% Theorem
%%%%%%%%%%%%%%%%%%%%%%%%%%%%%%
\begin{theorem}\label{thm:CAC2}
The following system has the CAC property but does not have the tetrahedron property:
\begin{subequations}\label{eqns:dKdV_lsG_CAC_2}%方程式check ok
\begin{align}
 &\wou-u=\dfrac{\be_{m+1}-\al_l}{\wu}-\dfrac{\be_m-\al_{l+1}}{\ou},
 \label{eqn:dKdV_lsG_CAC_2_A}\\
 &\wu u+\dfrac{(q_m-p_lx)(p_l-q_{m+1}\wx)}{x}=0,
 \label{eqn:dKdV_lsG_CAC_2_B}\\
 &\ou u+\dfrac{(q_m-p_lx)(p_{l+1}-q_m\ox)}{\ox}=0,
 \label{eqn:dKdV_lsG_CAC_2_C}\\
 &\dfrac{\,\wox\,}{x}
 =\left(\dfrac{p_{l+1}-q_m\ox}{q_m-p_{l+1}\ox}\right)
 \left(\dfrac{q_{m+1}-p_l\wx}{p_l-q_{m+1}\wx}\right),
 \label{eqn:dKdV_lsG_CAC_2_AA}
\end{align}
\end{subequations}
where $p_l$ and $q_m$ are given by Equation \eqref{eqn:pq_albe_ga}.
Since the equations \eqref{eqn:dKdV_lsG_CAC_2_A} and \eqref{eqn:dKdV_lsG_CAC_2_AA} are equal to the \PDE s \eqref{eqn:dKdV} and \eqref{eqn:lsG}, respectively, we can also say that the pair of equations \eqref{eqn:dKdV_lsG_CAC_2_B} and \eqref{eqn:dKdV_lsG_CAC_2_C} is a CAC-type BT from the {\PDE} \eqref{eqn:dKdV} to the {\PDE} \eqref{eqn:lsG}.
\end{theorem}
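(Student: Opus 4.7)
The approach parallels the one used to establish Theorem~\ref{thm:CAC1} in \S\ref{subsection:CABC_to_CAC_1}, with the CABC-type BT \eqref{eqns:dKdV_CABC_1} replaced by the new CABC-type BT \eqref{eqns:dKdV_CABC_2} of \S\ref{subsection:dKdV_CABC}. The key observation is that both \eqref{eqns:dKdV_CABC_2} and \eqref{eqns:lsG_CABC} are BTs from the \PDE s \eqref{eqn:dKdV} and \eqref{eqn:lsG}, respectively, to the \emph{same} multi-quadratic equation \eqref{eqn:multiquadratic} under the parameter relation \eqref{eqn:pq_albe_ga}, so the corresponding $v$-variables can be identified and eliminated.

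First, I would eliminate one of the $v$-variables between a face-equation of \eqref{eqns:dKdV_CABC_2} and a face-equation of \eqref{eqns:lsG_CABC}. Concretely, starting from \eqref{eqn:dKdV_CABC_2_B} and \eqref{eqn:lsG_CABC_B} and eliminating $\wv$ produces, in analogy with \eqref{eqn:v_ux_CABC_1}, a polynomial relation in $v$, $u$, $x$ that factors into two pieces. Choosing the factor that yields a rational expression for $v$ in terms of $u$ and $x$ (the counterpart here of \eqref{eqn:v_ux_CABC_2}, but with $p_l$ and $q_m$ entering in swapped roles because of the different placement of $v$ and $\wv$ in \eqref{eqn:dKdV_CABC_2_B} compared with \eqref{eqn:dKdV_CABC_1_B}) and substituting back, via this expression and its $m$-shift, into \eqref{eqn:dKdV_CABC_2_B} and \eqref{eqn:dKdV_CABC_2_C} produces exactly the two coupling equations \eqref{eqn:dKdV_lsG_CAC_2_B} and \eqref{eqn:dKdV_lsG_CAC_2_C}. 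This step is routine algebra, entirely analogous to the passage from \eqref{eqn:v_ux_CABC_1}--\eqref{eqn:v_ux_CABC_2} to \eqref{eqn:dKdV_lsG_CAC_1_B}--\eqref{eqn:dKdV_lsG_CAC_1_C} carried out in \S\ref{subsection:CABC_to_CAC_1}.

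Second, to verify the CAC property, I would assign variables to the vertices of the cube in Figure~\ref{fig:cube_ux} ($u$-variables at the bottom, $x$-variables at the top) with the six face-equations taken from \eqref{eqns:dKdV_lsG_CAC_2}, and compute $\wox$ from the initial data $\{u,\ou,\wu,x\}$ along the three distinct chains of face-equations permitted by the cube (see Appendix~\ref{subsection:CAC_def}). Each chain gives a rational expression in the initial data and the parameters $\al_l,\al_{l+1},\be_m,\be_{m+1},p_l,p_{l+1},q_m,q_{m+1}$; the CAC property is equivalent to showing these three expressions agree. By the structural parallel with Theorem~\ref{thm:CAC1}, I expect them to reduce to a single closed-form rational expression for $\wox$ whose numerator and denominator both involve $u$ nontrivially, which simultaneously establishes the CAC property and shows that the tetrahedron property fails (since a tetrahedron-closing formula must be independent of $u$).

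The main obstacle is the algebraic verification of the triple agreement of expressions for $\wox$, which is a lengthy but mechanical rational-function computation; as in the proof of Theorem~\ref{thm:CAC1}, this is most efficiently carried out with a computer algebra system. Beyond that bookkeeping, no conceptual difficulty is anticipated, since the whole construction is driven by the shared target equation \eqref{eqn:multiquadratic} of the two CABC-type BTs.
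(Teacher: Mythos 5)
Your proposal matches the paper's argument: the system is obtained by eliminating the $v$-variable between the CABC-type BTs \eqref{eqns:dKdV_CABC_2} and \eqref{eqns:lsG_CABC} exactly as in \S\ref{subsection:CABC_to_CAC_1}, and the theorem itself is proved by computing $\wox$ from the initial data $\{u,\ou,\wu,x\}$ in the three admissible ways, checking (by machine) that they agree, and observing that the resulting closed-form expression still depends on $u$, so the tetrahedron property fails. This is essentially the same approach as the paper's proof.
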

%%%%%%%%%%%%%%%%%%%%%%%%%%%%%%
% proof
%%%%%%%%%%%%%%%%%%%%%%%%%%%%%%
\begin{proof}
Consider the cube in Figure \ref{fig:cube_ux}.
In the three different ways (see Appendix \ref{subsection:CAC_def}), $\wox$ can be uniquely represented by the initial values $\{u,\ou,\wu,x\}$ as
\begin{equation}%方程式check ok
 \wox=\dfrac{p_{l+1}\ou\,\Big((\al_l-\be_{m+1})(q_m-p_lx)+p_l u\wu x\Big)}{q_{m+1}\wu\,\Big((\al_{l+1}-\be_m)(q_m-p_lx)+q_m u\ou\Big)}.
\end{equation}
Since $\wox$ depends on $u$, the tetrahedron property does not hold.
Therefore, we have completed the proof.
\end{proof}
%%%%%%%%%%%%%%%%%%%%%%%%%%%%%%
%%%%%%%%%%%%%%%%%%%%%%%%%%%%%%

Setting
\begin{equation}
 x_{l,m}=\dfrac{F_{l,m}}{G_{l,m}},\quad
 \Phi_{l,m}=\begin{pmatrix}F_{l,m}\\G_{l,m}\end{pmatrix},
\end{equation}
from the equations \eqref{eqn:dKdV_lsG_CAC_2_B} and \eqref{eqn:dKdV_lsG_CAC_2_C}, we obtain the Lax pair \eqref{eqn:Lax_phi} with \eqref{eqn:Lax_dKdV_CAC2} of the {\PDE} \eqref{eqn:dKdV}.

%%%%%%%%%%%%%%%%%%%%%%%%%%%%%%
%% Remark
%%%%%%%%%%%%%%%%%%%%%%%%%%%%%%
\begin{remark}
Similar to the Lax pair of the {\PDE} \eqref{eqn:dKdV}, 
we can also construct a Lax pair of the {\PDE} \eqref{eqn:lsG} from the equations \eqref{eqn:dKdV_lsG_CAC_2_B} and \eqref{eqn:dKdV_lsG_CAC_2_C} by setting 
\begin{equation}
 u_{l,m}=\dfrac{F_{l,m}}{G_{l,m}},\quad
 \Phi_{l,m}=\begin{pmatrix}F_{l,m}\\G_{l,m}\end{pmatrix}.
\end{equation}
However, in this case, how to introduce a spectral parameter is still unsolved at this point.
\end{remark}
%%%%%%%%%%%%%%%%%%%%%%%%%%%%%%
%%%%%%%%%%%%%%%%%%%%%%%%%%%%%%

%%%%%%%%%%%%%%%%%%%%%%%%%%%%%%%%%%%%%%%%%%%%%%%%%%%%%%%%%%%
%%% 2.4 A CAC-type auto-BT of the {\PDE} \eqref{eqn:lsG} 
%%%%%%%%%%%%%%%%%%%%%%%%%%%%%%%%%%%%%%%%%%%%%%%%%%%%%%%%%%%
\subsection{A CAC-type auto-BT of the {\PDE} \eqref{eqn:lsG}}\label{subsection:2CACs_to_CAC}
In this subsection, using two CAC-systems \eqref{eqns:dKdV_lsG_CAC_1} and \eqref{eqns:dKdV_lsG_CAC_2}, we obtain a CAC-system, which implies a CAC-type auto-BT of the {\PDE} \eqref{eqn:lsG}.

Here, let us use the following CAC-system:
\begin{subequations}\label{eqns:dKdV_lsG_CAC_1_XPQ}%方程式check ok
\begin{align}
 &\wou-u=\dfrac{\be_{m+1}-\al_l}{\wu}-\dfrac{\be_m-\al_{l+1}}{\ou},
 \label{eqn:dKdV_lsG_CAC_1_A_XPQ}\\
 &\wu u+\dfrac{(P_l-Q_mX)(Q_{m+1}-P_l\wX)}{\wX}=0,
 \label{eqn:dKdV_lsG_CAC_1_B_XPQ}\\
 &\ou u+\dfrac{(P_l-Q_mX)(Q_m-P_{l+1}\oX)}{X}=0,
 \label{eqn:dKdV_lsG_CAC_1_C_XPQ}\\
 &\dfrac{\,\woX\,}{X}
 =\left(\dfrac{P_{l+1}-Q_m\oX}{Q_m-P_{l+1}\oX}\right)
 \left(\dfrac{Q_{m+1}-P_l\wX}{P_l-Q_{m+1}\wX}\right),
 \label{eqn:dKdV_lsG_CAC_1_AA_XPQ}
\end{align}
\end{subequations}
where $X=X_{l,m}$ and 
\begin{equation}\label{eqn:PQ_albe_gala}%方程式check ok
 P_l=\sqrt{\al_l-\ga-\la},\quad
 Q_m=\sqrt{\be_m-\ga-\la}\,,
\end{equation}
instead of the CAC-system \eqref{eqns:dKdV_lsG_CAC_1} with the variable $x_{l,m}$ and the parameters $p_l$, $q_m$ in the system \eqref{eqns:dKdV_lsG_CAC_1} replaced by the variable $X_{l,m}$ and the parameters $P_l$, $Q_m$, respectively.
Eliminating $\wu u$ from the equations \eqref{eqn:dKdV_lsG_CAC_2_B} and \eqref{eqn:dKdV_lsG_CAC_1_B_XPQ}
and $\ou u$ from the equations \eqref{eqn:dKdV_lsG_CAC_2_C} and \eqref{eqn:dKdV_lsG_CAC_1_C_XPQ}, we obtain
\begin{subequations}
\begin{align}
 &\dfrac{(q_m-p_lx)(p_l-q_{m+1}\wx)}{x}=\dfrac{(P_l-Q_mX)(Q_{m+1}-P_l\wX)}{\wX},\\
 &\dfrac{(q_m-p_lx)(p_{l+1}-q_m\ox)}{\ox}=\dfrac{(P_l-Q_mX)(Q_m-P_{l+1}\oX)}{X},
\end{align}
\end{subequations}
respectively.
The equations above, together with the equations \eqref{eqn:dKdV_lsG_CAC_2_AA} and \eqref{eqn:dKdV_lsG_CAC_1_AA_XPQ}, lead to the following theorem.

%%%%%%%%%%%%%%%%%%%%%%%%%%%%%%
%% Theorem
%%%%%%%%%%%%%%%%%%%%%%%%%%%%%%
\begin{theorem}\label{thm:CAC3}
The following system has the CAC property but does not have the tetrahedron property:
\begin{subequations}\label{eqns:lsG_CAC}%方程式check ok
\begin{align}
 &\dfrac{\,\wox\,}{x}
 =\left(\dfrac{p_{l+1}-q_m\ox}{q_m-p_{l+1}\ox}\right)
 \left(\dfrac{q_{m+1}-p_l\wx}{p_l-q_{m+1}\wx}\right),
 \label{eqn:lsG_CAC_A}\\
 &\dfrac{(q_m-p_lx)(p_l-q_{m+1}\wx)}{x}=\dfrac{(P_l-Q_mX)(Q_{m+1}-P_l\wX)}{\wX},
 \label{eqn:lsG_CAC_B}\\
 &\dfrac{(q_m-p_lx)(p_{l+1}-q_m\ox)}{\ox}=\dfrac{(P_l-Q_mX)(Q_m-P_{l+1}\oX)}{X},
 \label{eqn:lsG_CAC_C}\\
 &\dfrac{\,\woX\,}{X}
 =\left(\dfrac{P_{l+1}-Q_m\oX}{Q_m-P_{l+1}\oX}\right)
 \left(\dfrac{Q_{m+1}-P_l\wX}{P_l-Q_{m+1}\wX}\right),
 \label{eqn:lsG_CAC_AA}
\end{align}
\end{subequations}
where
\begin{equation}\label{eqn:pq_PQ}%方程式check ok
 P_l=\sqrt{{p_l}^2-\la},\quad
 Q_m=\sqrt{{q_m}^2-\la}\,.
\end{equation}
Since both of the equations \eqref{eqn:lsG_CAC_A} and \eqref{eqn:lsG_CAC_AA} are equal to the {\PDE} \eqref{eqn:lsG}, we can also say that the pair of equations \eqref{eqn:lsG_CAC_B} and \eqref{eqn:lsG_CAC_C} is a CAC-type auto-BT of the {\PDE} \eqref{eqn:lsG}.
\end{theorem}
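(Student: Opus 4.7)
The plan is to verify the CAC property of the system \eqref{eqns:lsG_CAC} by placing the $x$-variables at the bottom vertices and the $X$-variables at the top vertices of the cube in Figure \ref{fig:cube_ux}. Given initial data $\{x,\ox,\wx,X\}$, I would first observe that each of \eqref{eqn:lsG_CAC_A}, \eqref{eqn:lsG_CAC_B}, and \eqref{eqn:lsG_CAC_C} is affine-linear in one of the remaining variables, so they uniquely determine $\wox$, $\wX$, and $\oX$ respectively. The crux of CAC is then that the three candidate expressions for $\woX$---obtained from \eqref{eqn:lsG_CAC_AA} in terms of $\{X,\oX,\wX\}$, from \eqref{eqn:lsG_CAC_B}$_{l\to l+1}$ in terms of $\{\ox,\wox,\oX\}$, and from \eqref{eqn:lsG_CAC_C}$_{m\to m+1}$ in terms of $\{\wx,\wox,\wX\}$---all coincide.

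Rather than attacking this by brute-force computation on \eqref{eqns:lsG_CAC}, I would deduce it from the CAC of the two parent systems \eqref{eqns:dKdV_lsG_CAC_2} and \eqref{eqns:dKdV_lsG_CAC_1_XPQ} established in Theorems \ref{thm:CAC2} and \ref{thm:CAC1}. Introduce an auxiliary parameter $u$; using \eqref{eqn:dKdV_lsG_CAC_2_B}, \eqref{eqn:dKdV_lsG_CAC_2_C}, and \eqref{eqn:dKdV_lsG_CAC_2_A}, define $\wu$, $\ou$, and $\wou$ from $u$ and the initial $x,\ox,\wx$. By Theorem \ref{thm:CAC2}, the resulting configuration is consistent, and in particular \eqref{eqn:dKdV_lsG_CAC_2_AA} reproduces the same $\wox$ as \eqref{eqn:lsG_CAC_A}. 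Next, applying Theorem \ref{thm:CAC1} to the $XPQ$-renamed system \eqref{eqns:dKdV_lsG_CAC_1_XPQ} with initial data $\{u,\ou,\wu,X\}$, the values $\oX,\wX,\woX$ are consistently determined. Because \eqref{eqn:lsG_CAC_B} and \eqref{eqn:lsG_CAC_C} were derived precisely by eliminating $\wu u$ and $\ou u$ between the two parents, the $\wX$ and $\oX$ produced here automatically satisfy \eqref{eqn:lsG_CAC_B} and \eqref{eqn:lsG_CAC_C}; and the three internal paths for $\woX$ in the parent system \eqref{eqns:dKdV_lsG_CAC_1_XPQ}, translated via the shifted eliminations \eqref{eqn:dKdV_lsG_CAC_2_B}$_{l\to l+1}$ and \eqref{eqn:dKdV_lsG_CAC_2_C}$_{m\to m+1}$ (which are the cube-shifts expressing $\wou\,\ou$ and $\wou\,\wu$ in two different ways), become precisely the three paths for $\woX$ in the new system. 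Since the equations of \eqref{eqns:lsG_CAC} themselves contain no $u$, the resulting $\wox,\oX,\wX,\woX$ are manifestly independent of the auxiliary choice of $u$.

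For the failure of the tetrahedron property, I would follow the pattern of the proofs of Theorems \ref{thm:CAC1} and \ref{thm:CAC2} and compute $\woX$ explicitly as a rational function of $\{x,\ox,\wx,X\}$, verifying that it genuinely depends on $x$ (the cube vertex diagonally opposite to $\woX$). The main obstacle I anticipate is the careful bookkeeping needed to confirm that the $\ga$-versus-$(\ga+\la)$ shift built into the definitions \eqref{eqn:pq_albe_ga} and \eqref{eqn:PQ_albe_gala} matches the parameter relation \eqref{eqn:pq_PQ} consistently under every shift of $l$ and $m$ required to reduce the parent $\woX$-paths to those of \eqref{eqns:lsG_CAC}. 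Once this parameter-matching is verified, the three-way coincidence of $\woX$ and the nontrivial dependence on $x$ both follow from the inherited CAC of the parents.
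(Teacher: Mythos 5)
Your proposal is correct, but for the CAC part it takes a genuinely different route from the paper. The paper's proof is a direct verification: it computes $\woX$ along the three paths from the initial data $\{x,\ox,\wx,X\}$, exhibits the single resulting rational function explicitly, and reads off the failure of the tetrahedron property from the visible $x$-dependence of that formula; the parameter condition \eqref{eqn:pq_PQ} is simply observed to follow from \eqref{eqn:pq_albe_ga} and \eqref{eqn:PQ_albe_gala}. You instead inherit the three-way coincidence from Theorems \ref{thm:CAC1} and \ref{thm:CAC2}: introducing a generic auxiliary $u\neq0$, inverting the affine-linear relations \eqref{eqn:dKdV_lsG_CAC_2_B} and \eqref{eqn:dKdV_lsG_CAC_2_C} to produce $\ou,\wu$, taking $\wou$ from the dKdV equation, and observing that each of the three paths for $\woX$ in \eqref{eqns:lsG_CAC} factors through the corresponding path of the $XPQ$-system, because the new face-equations depend on the $u$-variables only through the products $\wu u$, $\ou u$, $\wou\,\ou$, $\wou\,\wu$, which the CAC of the two parent cubes pins down consistently; since no equation of \eqref{eqns:lsG_CAC} contains $u$, the conclusion is independent of the auxiliary choice. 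This buys a computation-free explanation of \emph{why} the construction of \S\ref{subsection:2CACs_to_CAC} yields a CAC-system, at the cost of some bookkeeping; it does not buy the tetrahedron statement, for which you still need the explicit $\woX$ exactly as in the paper, and your plan acknowledges this. Two points to tighten: (i) the parameter matching you flag as the main obstacle is immediate, since ${P_l}^2=\al_l-\ga-\la={p_l}^2-\la$ gives \eqref{eqn:pq_PQ} at once; (ii) you should state explicitly that, for generic data, re-solving the B- and C-faces of the parent cubes for $\ox,\wx$ from the constructed $\ou,\wu$ returns the original values, so that Theorems \ref{thm:CAC1} and \ref{thm:CAC2} apply with $\{u,\ou,\wu,x\}$ and $\{u,\ou,\wu,X\}$ as initial data in the sense of Appendix \ref{subsection:CAC_def}.
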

%%%%%%%%%%%%%%%%%%%%%%%%%%%%%%
% proof
%%%%%%%%%%%%%%%%%%%%%%%%%%%%%%
\begin{proof}
Consider the cube in Figure \ref{fig:cube_xX}.
In the three different ways (see Appendix \ref{subsection:CAC_def}), $\woX$ can be uniquely represented by the initial values $\{x,\ox,\wx,X\}$ as
\begin{equation}%方程式check ok
 \woX
 =\dfrac{Q_{m+1}(p_l-q_{m+1}\wx)\Big(({P_{l+1}}^2-{Q_m}^2)(P_l-Q_m X)\ox+Q_m(p_{l+1}-q_m \ox)(q_m-p_lx)X\Big)}
 {P_{l+1}(p_{l+1}-q_m\ox)\Big(({P_l}^2-{Q_{m+1}}^2)(P_l-Q_m X)x+P_l(q_m-p_l x)(p_l-q_{m+1}\wx)\Big)}.
\end{equation}
Since $\woX$ depends on $x$, the tetrahedron property does not hold.
Moreover, the condition \eqref{eqn:pq_PQ} can be obtained from the conditions \eqref{eqn:pq_albe_ga} and \eqref{eqn:PQ_albe_gala}.
Therefore, we have completed the proof.
\end{proof}
%%%%%%%%%%%%%%%%%%%%%%%%%%%%%%
%%%%%%%%%%%%%%%%%%%%%%%%%%%%%%

%%%%%%%%%%%%%%%%%%%%%%%%%%%%%%
%%% Figure 
%%%%%%%%%%%%%%%%%%%%%%%%%%%%%%
\begin{figure}[htbp]
\begin{center}
 \includegraphics[width=0.45\textwidth]{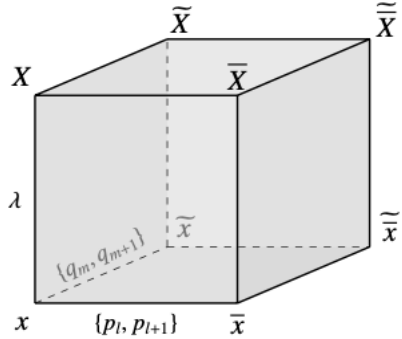}
\end{center}
\caption{
A cube for the CAC-system \eqref{eqns:lsG_CAC}.
See the caption of Figure \ref{fig:cube_ux} for explaining the placement of variables and parameters in the figure.
}
\label{fig:cube_xX}
\end{figure}
%%%%%%%%%%%%%%%%%%%%%%%%%%%%%%
%%%%%%%%%%%%%%%%%%%%%%%%%%%%%%

Setting
\begin{equation}
 X_{l,m}=\dfrac{F_{l,m}}{G_{l,m}},\quad
 \Psi_{l,m}=\begin{pmatrix}F_{l,m}\\G_{l,m}\end{pmatrix},
\end{equation}
from the equations \eqref{eqn:lsG_CAC_B} and \eqref{eqn:lsG_CAC_C}, we obtain the Lax pair \eqref{eqns:Lax_lsG} of the {\PDE} \eqref{eqn:lsG}.

%%%%%%%%%%%%%%%%%%%%%%%%%%%%%%
%% Remark
%%%%%%%%%%%%%%%%%%%%%%%%%%%%%%
\begin{remark}
Since Equation \eqref{eqn:lsG_CAC_AA} is also the lsG equation, setting
\begin{equation}
 x_{l,m}=\dfrac{F_{l,m}}{G_{l,m}},\quad
 \Psi_{l,m}=\begin{pmatrix}F_{l,m}\\G_{l,m}\end{pmatrix},
\end{equation}
we can construct another Lax pair of the lsG equation.
However, since the system \eqref{eqns:lsG_CAC} and the condition \eqref{eqn:pq_PQ} are invariant under the transformation
\begin{equation}%方程式check ok
 x_{l,m}\leftrightarrow \dfrac{1}{X_{l,m}},\quad
 p_l\leftrightarrow P_l,\quad
 q_m\leftrightarrow Q_m,\quad
 \la\leftrightarrow-\la,
\end{equation}
the resulting Lax pair is essentially the same as the Lax pair \eqref{eqns:Lax_lsG}.
\end{remark}
%%%%%%%%%%%%%%%%%%%%%%%%%%%%%%
%%%%%%%%%%%%%%%%%%%%%%%%%%%%%%

%%%%%%%%%%%%%%%%%%%%%%%%%%%%%%%%%%%%%%%%%%%%%%%%%%%%%%%%%%%
%% 3. Integrable systems on the 3-dimensional integer lattice
%%%%%%%%%%%%%%%%%%%%%%%%%%%%%%%%%%%%%%%%%%%%%%%%%%%%%%%%%%%
\section{Integrable systems on the 3-dimensional integer lattice}\label{section:3D}
In general, if a quad-equation has a CAC-type auto-BT, it can be easily extended to a system of \PDE s on the 3-dimensional integer lattice $\bbZ^3$. 
(See \cite{ABS2003:MR1962121,HJN2016:MR3587455,NW2001:MR1869690}.)
However, in the case of non-auto-BTs, a little ingenuity is required.
(See \cite{BollR2011:MR2846098}.)
In this section, we show how quad-equations that have CABC-type BTs, which are non-auto-BTs, can be extended to a system of \PDE s on the lattice $\bbZ^3$.
Moreover, we also show how the resulting system can be reduced to systems on the lattice $\bbZ^3$ associated with the CAC-type BTs.
%%%%%%%%%%%%%%%%%%%%%%%%%%%%%%%%%%%%%%%%%%%%%%%%%%%%%%%%%%%
%%% 3.1 A system of \PDE s on the lattice $\bbZ^3$ associated with the CABC property
%%%%%%%%%%%%%%%%%%%%%%%%%%%%%%%%%%%%%%%%%%%%%%%%%%%%%%%%%%%
\subsection{A system of \PDE s on the lattice $\bbZ^3$ associated with the CABC property}
In this subsection, using the three CABC-systems \eqref{eqns:dKdV_CABC_1}, \eqref{eqns:lsG_CABC} and \eqref{eqns:dKdV_CABC_2},
we construct an integrable system of \PDE s on the lattice $\bbZ^3$.
We also show the $\tau$-function of the resulting system.

%%%%%%%%%%%%%%%%%%%%%%%%%%%%%%
%% Remark
%%%%%%%%%%%%%%%%%%%%%%%%%%%%%%
\begin{remark}
Since Equation \eqref{eqn:dKdV_CABC_1_S} can be obtained from the equations \eqref{eqn:dKdV_CABC_1_B} and \eqref{eqn:dKdV_CABC_1_C}, we here omit Equation \eqref{eqn:dKdV_CABC_1_S} when using the CABC-system \eqref{eqns:dKdV_CABC_1} for simplicity.
For the same reason, the equations \eqref{eqn:lsG_CABC_S} and \eqref{eqn:dKdV_CABC_2_S} are also omitted when using the CABC-systems \eqref{eqns:lsG_CABC} and \eqref{eqns:dKdV_CABC_2}, respectively.
\end{remark}
%%%%%%%%%%%%%%%%%%%%%%%%%%%%%%
%%%%%%%%%%%%%%%%%%%%%%%%%%%%%%

Let us denote the cubes for the three CABC-systems \eqref{eqns:dKdV_CABC_1}, \eqref{eqns:lsG_CABC} and \eqref{eqns:dKdV_CABC_2} by the symbols $C^{(1)}_{l,m}$, $C^{(2)}_{l,m}$ and $C^{(3)}_{l,m}$, respectively.
(See \S \ref{subsection:CABC_def} for the correspondence between a CABC-system and a cube.)
Overlap the cubes as in Figure \ref{fig_lattice_uvx} (Left), in the order $C^{(1)}_{l,m}$, $C^{(2)}_{l,m}$, $C^{(2)}_{l,m}$, $C^{(3)}_{l,m}$, so that the facing surfaces have the same equations.
We here also denote the resulting rectangular constructed by the set of four stacked cubes by the symbol $C^{(1223)}_{l,m}$.
Because both of the face-equations at the bottom and top of $C^{(1223)}_{l,m}$ are the dKdV equation, it can be placed repeatedly up and down as shown in Figure \ref{fig_lattice_uvx} (Right).
Note that since the parameter $\ga$ is not included in the dKdV equation, a different $\ga$ can be chosen for each stack of $C^{(1223)}_{l,m}$.
This repeated placement yields the following system of \PDE s on the lattice $\bbZ^3$:
\begin{subequations}\label{eqns:3D_uvx}%方程式check ok
\begin{align}
 &\wou-u=\dfrac{\be_{m+1}-\al_l}{\wu}-\dfrac{\be_m-\al_{l+1}}{\ou},
 \label{eqn:3D_uvx_u}\\
 &\dfrac{\,\wox\,}{x}=\left(\dfrac{p_{l+1,n}-q_{m,n}\ox}{q_{m,n}-p_{l+1,n}\ox}\right)\left(\dfrac{q_{m+1,n}-p_{l,n}\wx}{p_{l,n}-q_{m+1,n}\wx}\right),
 \label{eqn:3D_uvx_x}\\
 &(u-v)\left(\dfrac{\be_m-\al_l}{u}+\wv\right)-\be_m+\ga_n=0,
 \label{eqn:3D_uvx_uvB}\\
 &\dfrac{\be_m-\al_l}{u}-\dfrac{\ga_n-\al_l}{v}-\ou+\ov=0,
 \label{eqn:3D_uvx_uvC}\\
 &\dfrac{v\wv}{{p_{l,n}}^2}-\dfrac{x(p_{l,n}-q_{m,n}x)}{p_{l,n}x-q_{m,n}}=0,
 \label{eqn:3D_uvx_xvB}\\
 &\dfrac{v\ov}{p_{l,n}p_{l+1,n}}-\dfrac{\ox(p_{l,n}-q_{m,n}x)}{p_{l,n}x-q_{m,n}}=0,
 \label{eqn:3D_uvx_xvC}\\
 &(\hu-\wv)\left(\dfrac{\be_m-\al_l}{\hu}+v\right)-\be_m+\ga_n=0,
 \label{eqn:3D_uvx_TuvB}\\
 &\dfrac{\be_m-\al_l}{\hu}-\dfrac{\ga_n-\al_{l+1}}{\ov}-\hou+v=0,
 \label{eqn:3D_uvx_TuvC}
\end{align}
\end{subequations}
where $u=u_{l,m,n}$, $v=v_{l,m,n}$, $x=x_{l,m,n}$ and 
\begin{equation}%方程式check ok
 p_{l,n}=\sqrt{\al_l-\ga_n},\quad
 q_{m,n}=\sqrt{\be_m-\ga_n}.
\end{equation}

%%%%%%%%%%%%%%%%%%%%%%%%%%%%%%
%% Remark
%%%%%%%%%%%%%%%%%%%%%%%%%%%%%%
\begin{remark}
The correspondence between the cubes $C^{(1)}_{l,m}$, $C^{(2)}_{l,m}$, $C^{(3)}_{l,m}$ and the equations in the system \eqref{eqns:3D_uvx} is as follows.
{\rm
\begin{center}
\begin{tabular}{ll}
$C^{(1)}_{l,m}$ :&\eqref{eqn:3D_uvx_u}, \eqref{eqn:3D_uvx_uvB}, \eqref{eqn:3D_uvx_uvC}\\
$C^{(2)}_{l,m}$ :&\eqref{eqn:3D_uvx_x}, \eqref{eqn:3D_uvx_xvB}, \eqref{eqn:3D_uvx_xvC}\\
$C^{(3)}_{l,m}$ :&\eqref{eqn:3D_uvx_u}$_{n\to n+1}$, \eqref{eqn:3D_uvx_TuvB}, \eqref{eqn:3D_uvx_TuvC}
\end{tabular}
\end{center}
}
\end{remark}
%%%%%%%%%%%%%%%%%%%%%%%%%%%%%%

%%%%%%%%%%%%%%%%%%%%%%%%%%%%%%
%%% Figure 
%%%%%%%%%%%%%%%%%%%%%%%%%%%%%%
\begin{figure}[htbp]
\begin{center}
 \includegraphics[width=0.4\textwidth]{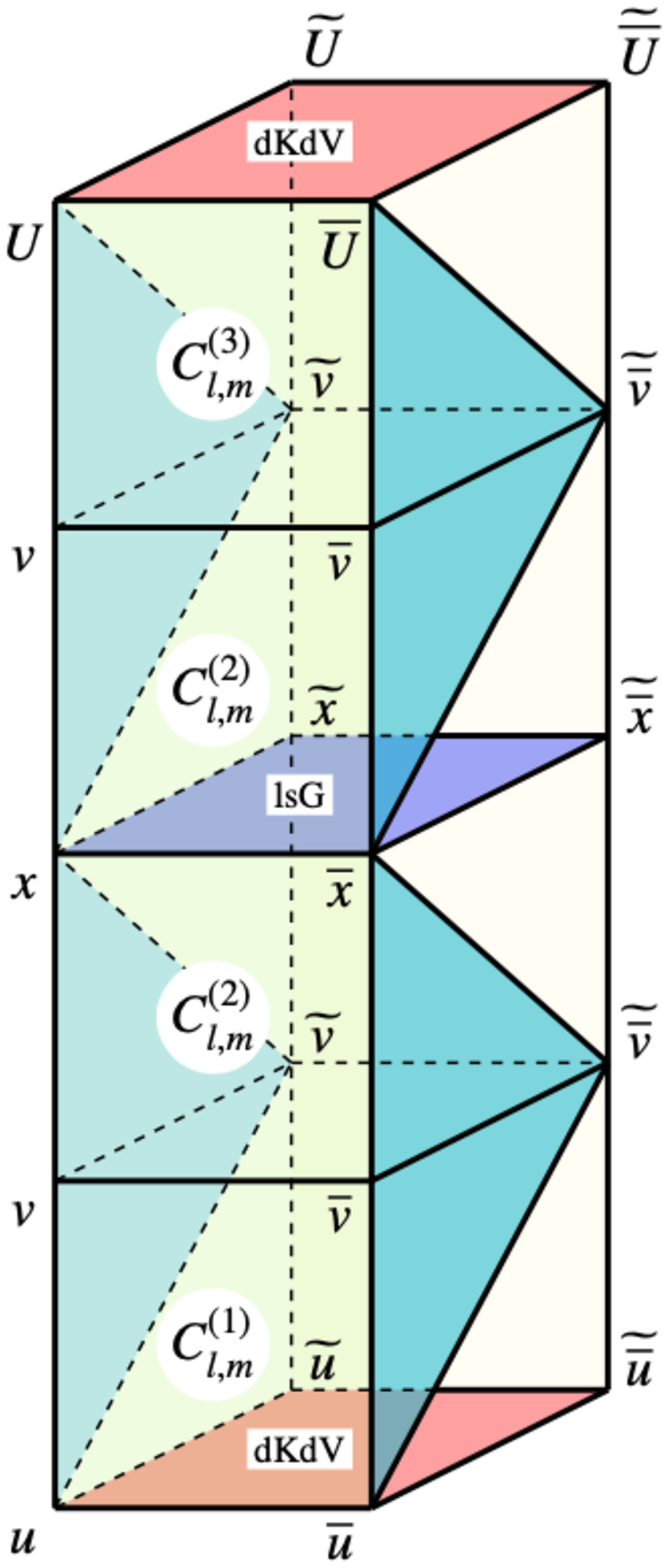}\hspace{3em}
 \includegraphics[width=0.42\textwidth]{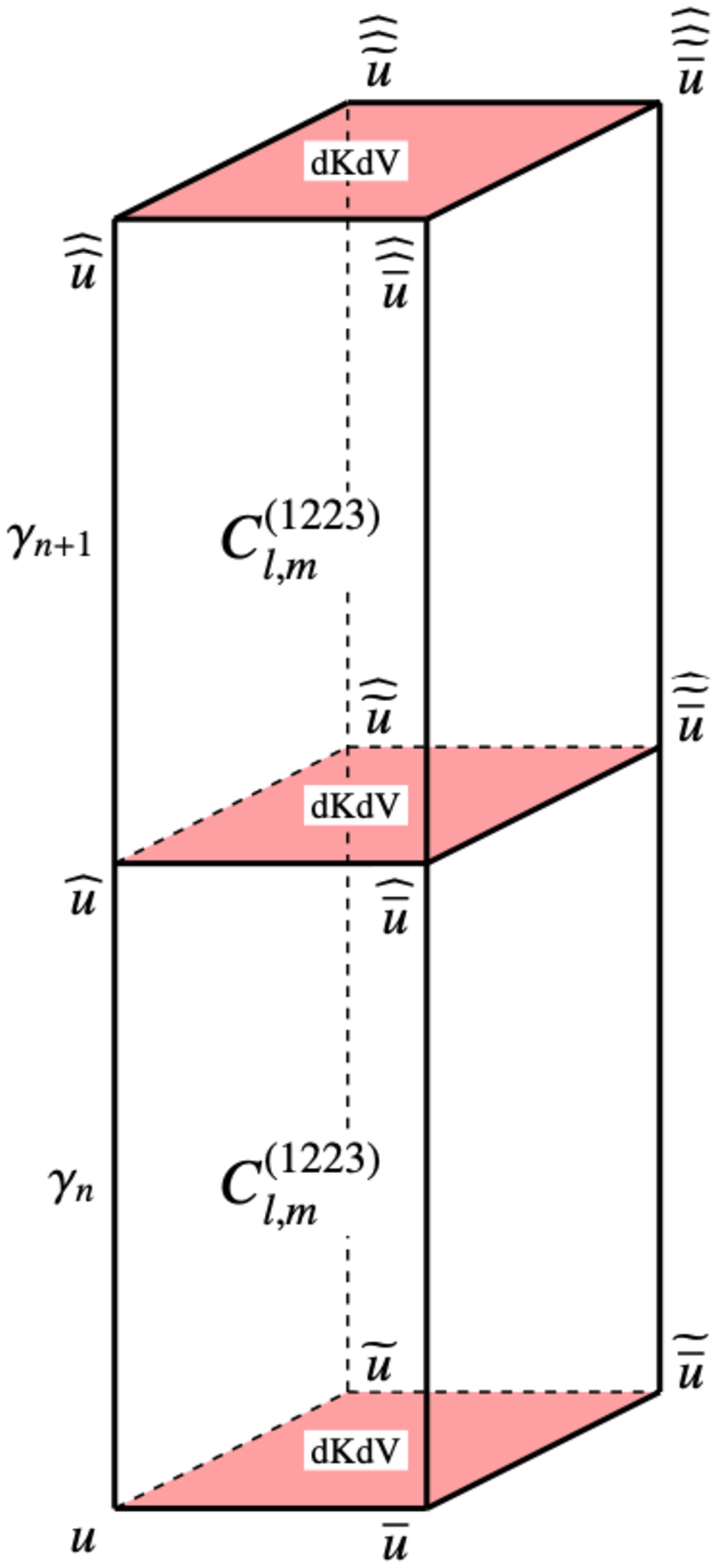}
\end{center}
\caption{
(Left): 
The rectangular $C^{(1223)}_{l,m}$ constructed by the four cubes.
Since the $u$-variable in the bottom cube $C^{(1)}_{l,m}$ and that in the top cube $C^{(3)}_{l,m}$ can be treated as different, we used the symbol $U$ for the $u$-variable of $C^{(3)}_{l,m}$ to differentiate them.\\
(Right): 
Two rectangulars $C^{(1223)}_{l,m}$.
Since we choose a different parameter $\ga$ for each $C^{(1223)}_{l,m}$, we place the parameters $\ga_n$ on the vertical line segments of the rectangles.
}
\label{fig_lattice_uvx}
\end{figure}
%%%%%%%%%%%%%%%%%%%%%%%%%%%%%%
%%%%%%%%%%%%%%%%%%%%%%%%%%%%%%

The $\tau$-function is one of the most important objects in the theory of integrable systems.
We show the $\tau$-function of the system \eqref{eqns:3D_uvx} in the following theorem.

%%%%%%%%%%%%%%%%%%%%%%%%%%%%%%
%% Theorem
%%%%%%%%%%%%%%%%%%%%%%%%%%%%%%
\begin{theorem}\label{theorem:tau}
Let $\tau=\tau_{l,m,n}$ be the solution of the following bilinear equations:
\begin{subequations}\label{eqns:3D_uvx_bilinear}%方程式check ok
\begin{align}
 &p_{l,n}\hwtau\,\otau-q_{m,n}\wtau\,\hotau+\tau\,\hwotau=0,\label{eqn:3D_uvx_bilinear_1}\\
 &q_{m,n}\hwtau\,\otau-p_{l,n}\wtau\,\hotau-\htau\,\wotau=0.\label{eqn:3D_uvx_bilinear_2}
\end{align}
\end{subequations}
Then, the $\tau$-function of the system \eqref{eqns:3D_uvx} is given as
\begin{equation}\label{eqn:3D_uvx_tau}%方程式check ok
 u=-\dfrac{\,\tau\,\wotau\,}{\otau\,\wtau},\quad
 v=p_{l,n}\,\dfrac{\,\tau\,\hotau\,}{\otau\,\htau},\quad
 x=\dfrac{\,\hotau\,\wtau\,}{\otau\,\hwtau}.
\end{equation}
\end{theorem}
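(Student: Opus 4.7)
The plan is to verify the eight quad-equations of \eqref{eqns:3D_uvx} by direct substitution of the $\tau$-function formulas \eqref{eqn:3D_uvx_tau} together with their shifts in $l$, $m$, $n$, and to reduce each resulting polynomial identity to a consequence of the two basic bilinear equations \eqref{eqn:3D_uvx_bilinear_1}--\eqref{eqn:3D_uvx_bilinear_2}.

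As a preparation I would record the parameter identity $p_{l,n}^2 - q_{m,n}^2 = \alpha_l - \beta_m$, which rewrites the rational coefficients of the dKdV equation \eqref{eqn:3D_uvx_u} purely in terms of $p_{l,n}$ and $q_{m,n}$, and I would assemble a small library of auxiliary bilinear identities from \eqref{eqns:3D_uvx_bilinear} by forming the combinations $p_{l,n}\cdot\eqref{eqn:3D_uvx_bilinear_1}\pm q_{m,n}\cdot\eqref{eqn:3D_uvx_bilinear_2}$ and $q_{m,n}\cdot\eqref{eqn:3D_uvx_bilinear_1}-p_{l,n}\cdot\eqref{eqn:3D_uvx_bilinear_2}$. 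The last of these yields the convenient dKdV-type identity
\begin{equation*}
(\alpha_l-\beta_m)\,\wtau\,\hotau + q_{m,n}\,\tau\,\hwotau + p_{l,n}\,\htau\,\wotau = 0,
\end{equation*}
which already encodes the $u$-shift structure.

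The verification itself then splits naturally into three blocks. First, the single-variable equations \eqref{eqn:3D_uvx_u} and \eqref{eqn:3D_uvx_x}, after substitution and clearing denominators, reduce to identities among $\tau$-values on a pair of adjacent horizontal layers; these follow from \eqref{eqns:3D_uvx_bilinear} together with its $l$- and $m$-shifts. Second, the face-equations \eqref{eqn:3D_uvx_uvB}--\eqref{eqn:3D_uvx_xvC}, which couple $v$ with $u$ or with $x$, become bilinear identities that either coincide with one of \eqref{eqns:3D_uvx_bilinear} (possibly shifted) or are simple linear combinations of them. Third, the top-cube equations \eqref{eqn:3D_uvx_TuvB}--\eqref{eqn:3D_uvx_TuvC} mix $n$-shifts with $l$-shifts, and are handled by combining \eqref{eqns:3D_uvx_bilinear} with its $l$-shift.

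The main obstacle is bookkeeping: each substitution produces a rational expression whose numerator is a sum of products of four to six $\tau$-values at distinct lattice points, and identifying the exact linear combination of basic and shifted bilinear equations that cancels the numerator is a careful, though mechanical, exercise. Because \eqref{eqns:3D_uvx_bilinear} supplies two independent bilinear relations at each lattice cube, there is just enough algebraic room to handle all eight quad-equations without imposing any further assumption on $\tau$.
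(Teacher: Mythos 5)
Your overall strategy---substitute \eqref{eqn:3D_uvx_tau} into all eight equations and reduce the result to \eqref{eqns:3D_uvx_bilinear}---does work for the six equations that couple $v$ to $u$ or to $x$, and your auxiliary identity $(\al_l-\be_m)\,\wtau\,\hotau+q_{m,n}\,\tau\,\hwotau+p_{l,n}\,\htau\,\wotau=0$ is correct and is exactly what those verifications need. The gap is in the mechanism you propose for closing each case, namely ``identifying the exact linear combination of basic and shifted bilinear equations that cancels the numerator.'' For Equation \eqref{eqn:3D_uvx_u} this provably cannot succeed: after substituting $u=-\tau\,\wotau/(\otau\,\wtau)$ and clearing denominators, the residual is a trilinear polynomial in the level-$n$ values $\tau_{l+i,m+j,n}$ only, whereas every monomial of \eqref{eqn:3D_uvx_bilinear_1} and \eqref{eqn:3D_uvx_bilinear_2}, in any $l$-, $m$- or $n$-shift, contains exactly one factor from an adjacent level. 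No combination of the bilinear equations with polynomial coefficients can therefore equal a nonzero polynomial free of the level-$(n{+}1)$ (and level-$(n{-}1)$) variables: the dKdV residual lies only in the elimination/saturation ideal, not in the ideal itself. A similar, if less clear-cut, difficulty affects \eqref{eqn:3D_uvx_x}, whose verification requires the bilinear equations at all four unit squares and divisions by level-$(n{+}1)$ values.

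What is missing is the elimination of the level-$(n{+}1)$ variables, and this is precisely the step the paper's proof is organized around: it introduces $\omega=\htau/\tau$, uses \eqref{eqn:3D_uvx_bilinear_2} to rewrite $u=(p_{l,n}\oomega-q_{m,n}\womega)/\omega$, $v=p_{l,n}\oomega/\omega$, $x=\oomega/\womega$, and proves as a lemma that $\omega$ satisfies the lattice modified KdV equation $\woomega/\omega=(q_{m,n}\oomega-p_{l,n}\womega)/(q_{m,n}\womega-p_{l,n}\oomega)$ together with one further relation coupling levels $n$ and $n+1$; all eight equations of \eqref{eqns:3D_uvx} are then deduced from these two $\omega$-relations, the dKdV equation in particular via the standard Miura-type argument. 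If you wish to keep a direct route, you must replace ``find the linear combination'' by ``solve \eqref{eqns:3D_uvx_bilinear} for the level-$(n{+}1)$ $\tau$'s (equivalently, for the ratios $\htau/\tau$) and substitute,'' which is the same elimination in different clothing; as written, your plan cannot dispose of \eqref{eqn:3D_uvx_u}.
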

%%%%%%%%%%%%%%%%%%%%%%%%%%%%%%
% proof
%%%%%%%%%%%%%%%%%%%%%%%%%%%%%%
\begin{proof}
The proof is given in Appendix \ref{section:proof_tau}.
\end{proof}
%%%%%%%%%%%%%%%%%%%%%%%%%%%%%%
%%%%%%%%%%%%%%%%%%%%%%%%%%%%%%

%%%%%%%%%%%%%%%%%%%%%%%%%%%%%%%%%%%%%%%%%%%%%%%%%%%%%%%%%%%
%%% 3.2 Systems of \PDE s on the lattice $\bbZ^3$ associated with the CAC property
%%%%%%%%%%%%%%%%%%%%%%%%%%%%%%%%%%%%%%%%%%%%%%%%%%%%%%%%%%%
\subsection{Systems of \PDE s on the lattice $\bbZ^3$ associated with the CAC property}
In this subsection, we derive integrable systems on the lattice $\bbZ^3$ associated with the CAC property from the system \eqref{eqns:3D_uvx}.

As in \S \ref{subsection:CABC_to_CAC_1} and \S \ref{subsection:CABC_to_CAC_2}, eliminating the $v$-variable from the system \eqref{eqns:3D_uvx} we obtain the following system of \PDE s depending only on the $u$- and $x$-variables:
\begin{subequations}\label{eqns:3D_ux}%方程式check ok
\begin{align}
 &\wou-u=\dfrac{\be_{m+1}-\al_l}{\wu}-\dfrac{\be_m-\al_{l+1}}{\ou},
 \label{eqn:3D_ux_u}\\
 &\dfrac{\,\wox\,}{x}=\left(\dfrac{p_{l+1,n}-q_{m,n}\ox}{q_{m,n}-p_{l+1,n}\ox}\right)\left(\dfrac{q_{m+1,n}-p_{l,n}\wx}{p_{l,n}-q_{m+1,n}\wx}\right),
 \label{eqn:3D_ux_x}\\
 &\wu\,u+\dfrac{(p_{l,n}-q_{m,n}x)(q_{m+1,n}-p_{l,n}\wx)}{\wx}=0,
 \label{eqn:3D_ux_B1}\\
 &\ou\,u+\dfrac{(p_{l,n}-q_{m,n}x)(q_{m,n}-p_{l+1,n}\ox)}{x}=0,
 \label{eqn:3D_ux_C1}\\
 &\hwu~\hu+\dfrac{(q_{m,n}-p_{l,n}x)(p_{l,n}-q_{m+1,n}\wx)}{x}=0,
 \label{eqn:3D_ux_B2}\\
 &\hou~\hu+\dfrac{(q_{m,n}-p_{l,n}x)(p_{l+1,n}-q_{m,n}\ox)}{\ox}=0.
 \label{eqn:3D_ux_C2}
\end{align}
\end{subequations}
Indeed, eliminating $\wv$ from the equations \eqref{eqn:3D_uvx_uvB} and \eqref{eqn:3D_uvx_xvB},
we obtain
\begin{equation}%方程式check ok
 v=\dfrac{p_{l,n}ux}{p_{l,n}x-q_{m,n}}.
\end{equation}
Then, eliminating $v$-variable from the equations \eqref{eqn:3D_uvx_uvB} and \eqref{eqn:3D_uvx_uvC} by using the equation above, we obtain the equations \eqref{eqn:3D_ux_B1} and \eqref{eqn:3D_ux_C1}, respectively.
Moreover, eliminating $\wv$ from the equations \eqref{eqn:3D_uvx_xvB} and \eqref{eqn:3D_uvx_TuvB},
we obtain
\begin{equation}%方程式check ok
 v=\dfrac{p_{l,n}(p_{l,n}-q_{m,n}x)}{\hu},
\end{equation}
and then eliminating $v$-variable from the equations \eqref{eqn:3D_uvx_TuvB} and \eqref{eqn:3D_uvx_TuvC} by using the equation above, we obtain the equations \eqref{eqn:3D_ux_B2} and \eqref{eqn:3D_ux_C2}, respectively.

%%%%%%%%%%%%%%%%%%%%%%%%%%%%%%
%% Remark
%%%%%%%%%%%%%%%%%%%%%%%%%%%%%%
\begin{remark}
As shown in Figure \ref{fig:lattice_ux_x} (Left), the system \eqref{eqns:3D_ux} can be thought of as a superposition of the CAC-systems \eqref{eqns:dKdV_lsG_CAC_1} and \eqref{eqns:dKdV_lsG_CAC_2} alternating in the $n$-direction.
\end{remark}
%%%%%%%%%%%%%%%%%%%%%%%%%%%%%%
%%%%%%%%%%%%%%%%%%%%%%%%%%%%%%

%%%%%%%%%%%%%%%%%%%%%%%%%%%%%%
%%% Figure 
%%%%%%%%%%%%%%%%%%%%%%%%%%%%%%
\begin{figure}[htbp]
\begin{center}
 \includegraphics[width=0.45\textwidth]{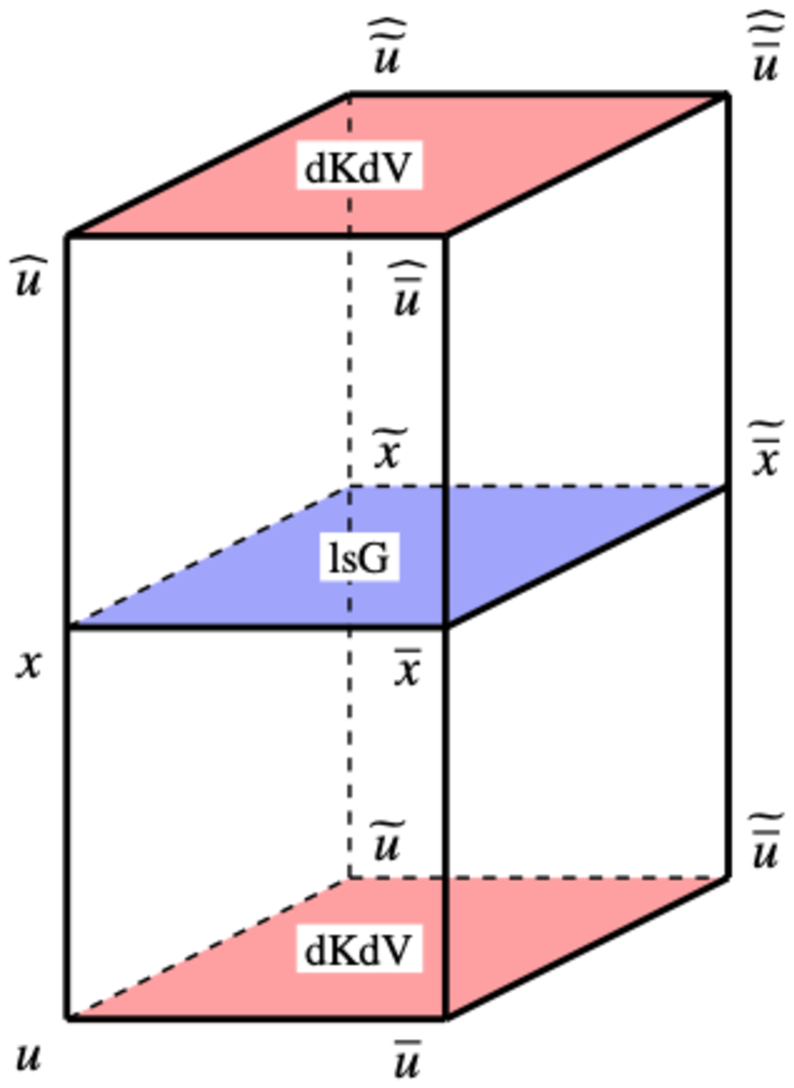}\hspace{2em}
 \includegraphics[width=0.45\textwidth]{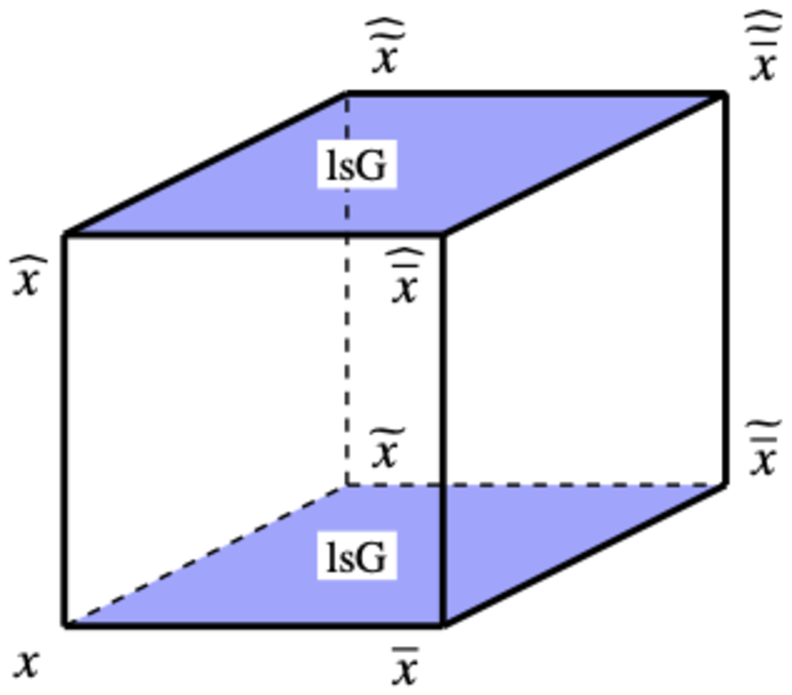}
\end{center}
\caption{
(Left): 
Cubes for the CAC-system \eqref{eqns:3D_ux}.
The system of equations \eqref{eqn:3D_ux_u}, \eqref{eqn:3D_ux_x}, \eqref{eqn:3D_ux_B1} and \eqref{eqn:3D_ux_C1} correspond to the lower cube, 
and the system of equations \eqref{eqn:3D_ux_u}$_{n\to n+1}$, \eqref{eqn:3D_ux_x}, \eqref{eqn:3D_ux_B2} and \eqref{eqn:3D_ux_C2} correspond to the upper cube.\\
(Right): 
A cube for the CAC-system \eqref{eqns:3D_x}.
}
\label{fig:lattice_ux_x}
\end{figure}
%%%%%%%%%%%%%%%%%%%%%%%%%%%%%%
%%%%%%%%%%%%%%%%%%%%%%%%%%%%%%

As in \S \ref{subsection:2CACs_to_CAC}, 
we eliminate $\hwu\,\hu$ from the equations \eqref{eqn:3D_ux_B1}$_{n\to n+1}$ and \eqref{eqn:3D_ux_B2}.
Also, we eliminate $\hou\,\hu$ from the equations \eqref{eqn:3D_ux_C1}$_{n\to n+1}$ and \eqref{eqn:3D_ux_C2}.
The resulting equations, together with Equation \eqref{eqn:3D_ux_x}, lead to the following system of \PDE s depending only on the $x$-variable:
\begin{subequations}\label{eqns:3D_x}%方程式check ok
\begin{align}
 &\dfrac{\,\wox\,}{x}=\left(\dfrac{p_{l+1,n}-q_{m,n}\ox}{q_{m,n}-p_{l+1,n}\ox}\right)\left(\dfrac{q_{m+1,n}-p_{l,n}\wx}{p_{l,n}-q_{m+1,n}\wx}\right),
 \label{eqn:3D_x_x}\\
 &\dfrac{(q_{m,n}-p_{l,n}x)(p_{l,n}-q_{m+1,n}\wx)}{x}
 =\dfrac{(p_{l,n+1}-q_{m,n+1}\hx)(q_{m+1,n+1}-p_{l,n+1}\hwx)}{\hwx},
 \label{eqn:3D_x_B}\\
 &\dfrac{(q_{m,n}-p_{l,n}x)(p_{l+1,n}-q_{m,n}\ox)}{\ox}
 =\dfrac{(p_{l,n+1}-q_{m,n+1}\hx)(q_{m,n+1}-p_{l+1,n+1}\hox)}{\hx}.
 \label{eqn:3D_x_C}
\end{align}
\end{subequations}

\vspace{0.5em}
%%%%%%%%%%%%%%%%%%%%%%%%%%%%%%
%% Remark
%%%%%%%%%%%%%%%%%%%%%%%%%%%%%%
\begin{remark}
~\\[-1em]
\begin{enumerate}
\item 
As shown in Figure \ref{fig:lattice_ux_x} (Right), the system \eqref{eqns:3D_x} can be thought of as a superposition of the CAC-system \eqref{eqns:lsG_CAC} in the $n$-direction.
\item
Both equations \eqref{eqn:3D_x_B} and \eqref{eqn:3D_x_C} are also the lsG equation.
Indeed, by setting
\begin{equation}%方程式check ok
 y_{m,n}=\dfrac{q_{m,n}-p_{l,n}x_{l,m,n}}{\sqrt{\al_l-\be_m}\,x_{l,m,n}},\quad
 a_m=-\sqrt{\al_l-\be_m},\quad
 b_n=p_{l,n},
\end{equation}
Equation \eqref{eqn:3D_x_B} can be rewritten as the following standard form of the lsG equation in $(m,n)$-direction:
\begin{equation}%方程式check ok
 \dfrac{y_{m+1,n+1}}{y_{m,n}}
 =\left(\dfrac{b_{n+1}-a_my_{m,n+1}}{a_m-b_{n+1}y_{m,n+1}}\right)
 \left(\dfrac{a_{m+1}-b_ny_{m+1,n}}{b_n-a_{m+1}y_{m+1,n}}\right),
\end{equation}
while by setting
\begin{equation}%方程式check ok
 z_{l,n}=\dfrac{p_{l,n}x_{l,m,n}-q_{m,n}}{\sqrt{\be_m-\al_l}},\quad
 c_l=-\sqrt{\be_m-\al_l},\quad
 d_n=q_{m,n},
\end{equation}
Equation \eqref{eqn:3D_x_C} can also be rewritten as the following standard form of the lsG equation in $(l,n)$-direction:
\begin{equation}%方程式check ok
 \dfrac{z_{l+1,n+1}}{z_{l,n}}
 =\left(\dfrac{d_{n+1}-c_lz_{l,n+1}}{c_l-d_{n+1}z_{l,n+1}}\right)
 \left(\dfrac{c_{l+1}-d_nz_{l+1,n}}{d_n-c_{l+1}z_{l+1,n}}\right).
\end{equation}
\end{enumerate}
\end{remark}
%%%%%%%%%%%%%%%%%%%%%%%%%%%%%%
%%%%%%%%%%%%%%%%%%%%%%%%%%%%%%

%%%%%%%%%%%%%%%%%%%%%%%%%%%%%%%%%%%%%%%%%%%%%%%%%%%%%%%%%%%
%% 4. Concluding remarks
%%%%%%%%%%%%%%%%%%%%%%%%%%%%%%%%%%%%%%%%%%%%%%%%%%%%%%%%%%%
\section{Concluding remarks}\label{ConcludingRemarks}
In this paper, using the CABC properties of the dKdV equation \eqref{eqn:dKdV} and the lsG equation \eqref{eqn:lsG}, we have shown that they have the CAC property.
Using the CABC and CAC properties, we also obtained their new Lax pairs.
Moreover, we have constructed the integrable system \eqref{eqns:3D_uvx} on the lattice $\bbZ^3$ by stacking the cubes associated with the CABC property in the third direction and have clarified the structure of its $\tau$-function.

In addition to the dKdV and lsG equations, there exist other integrable \PDE s for which we still do not know whether they have the CAC property.
We have shown that it is useful to consider the CABC property to clarify the CAC property of \PDE s.
Many studies have been done to find \PDE s which have the CAC property (see \S \ref{Introduction}). 
However, the dKdV and lsG equations have the CAC property but are not included in these previous studies.
For the reasons above, classifying \PDE s with the CABC property may be important for clarifying the CAC property of known integrable \PDE s. 
Moreover, it may lead to finding new integrable \PDE s. 
This should be a subject of future research.
%%%%%%%%%%%%%%%%%%%%%%%%%%%%%%
%% Acknowledgment
%%%%%%%%%%%%%%%%%%%%%%%%%%%%%%
\subsection*{Acknowledgment}
The author would like to thank Dr. Pavlos Kassotakis and Dr. Yang Shi for fruitful discussions.
This research was supported by a JSPS KAKENHI Grant Number JP19K14559.
The author used Wolfram Mathematica for checking computations.
%%%%%%%%%%%%%%%%%%%%%%%%%%%%%%%%%%%%%%%%%%%%%%%%%%%%%%%
%%% Appendix
%%%%%%%%%%%%%%%%%%%%%%%%%%%%%%%%%%%%%%%%%%%%%%%%%%%%%%%
\appendix
%%%%%%%%%%%%%%%%%%%%%%%%%%%%%%%%%%%%%%%%%%%%%%%%%%%%%%%%%%%
%% A. The CAC and CABC properties
%%%%%%%%%%%%%%%%%%%%%%%%%%%%%%%%%%%%%%%%%%%%%%%%%%%%%%%%%%%
\section{The CAC and CABC properties}\label{section:CAC_CABC}
In this appendix, we explain the consistency around a cube (CAC) property of a system of \PDE s in a particular form.
In addition, we also explain the consistency around a broken cube (CABC) property of a system of \PDE s in a particular form.
For more general definitions, see \cite{BS2002:MR1890049,HJN2016:MR3587455,NijhoffFW2002:MR1912127,WalkerAJ:thesis} for the CAC property and see \cite{JN2021:zbMATH07476241,nakazono2022properties} for the CABC property.

%%%%%%%%%%%%%%%%%%%%%%%%%%%%%%
%% Remark
%%%%%%%%%%%%%%%%%%%%%%%%%%%%%%
\begin{remark}
For simplicity, we omit the parameters of the equations in this appendix.
Moreover, the equations in this appendix are not limited to autonomous type.
For example, the relation between $Q(u,\ou,\wu,\wou)=0$ and $Q(\ou,\overline{\ou},\wou,\widetilde{\overline{\ou}})=0$ is not simple replacement of the corresponding $u$-variables; applying $l\to l+1$ to $Q(u,\ou,\wu,\wou)=0$ gives $Q(\ou,\overline{\ou},\wou,\widetilde{\overline{\ou}})=0$.
See \S \ref{section:dKdV_lsG_consistency} for examples involving parameters.
\end{remark}
%%%%%%%%%%%%%%%%%%%%%%%%%%%%%%
%%%%%%%%%%%%%%%%%%%%%%%%%%%%%%

%%%%%%%%%%%%%%%%%%%%%%%%%%%%%%%%%%%%%%%%%%%%%%%%%%%%%%%%%%%
%% A.1 The CAC property
%%%%%%%%%%%%%%%%%%%%%%%%%%%%%%%%%%%%%%%%%%%%%%%%%%%%%%%%%%%
\subsection{The CAC property}\label{subsection:CAC_def}
Let us consider the following system of \PDE s:
\begin{subequations}\label{eqns:CAC_PDE_ABCA'}
\begin{align}
 &A(u,\ou,\wu,\wou)=0,\label{eqn:CAC_PDE_A}\\
 &B(u,\wu,v,\wv)=0,\label{eqn:CAC_PDE_B}\\
 &C(u,\ou,v,\ov)=0,\label{eqn:CAC_PDE_C}\\
 &A'(v,\ov,\wv,\wov)=0,\label{eqn:CAC_PDE_AA}
\end{align}
\end{subequations}
where $u=u_{l,m}$ and $v=v_{l,m}$.
Here, the functions $A$, $B$, $C$ and $A'$ are irreducible multilinear polynomials in four variables.
Consider the following sublattice of the lattice $\bbZ^3$:
\begin{equation}\label{eqn:lattice_01}
 \set{(l,m,0)\in\bbZ^3}{l,m\in\bbZ}\cup\set{(l,m,1)\in\bbZ^3}{l,m\in\bbZ}
\end{equation}
and assign the $u$- and $v$-variables on the vertices of the sublattice by the following correspondences:
\begin{equation}\label{eqn:lattice_uv}
 (l,m,0)\,\leftrightarrow\,u_{l,m},\qquad
 (l,m,1)\,\leftrightarrow\,v_{l,m}.
\end{equation}
Then, focusing on the cube given by the 8 points
\begin{align}
 &(l,m,0),\quad (l+1,m,0),\quad (l,m+1,0),\quad (l+1,m+1,0),\notag\\
 &(l,m,1),\quad (l+1,m,1),\quad (l,m+1,1),\quad (l+1,m+1,1),
 \label{eqn:lattice_8points}
\end{align}
from the system \eqref{eqns:CAC_PDE_ABCA'}, we obtain the following face-equations of the cube (see Figure \ref{fig:CACcube}):
\begin{subequations}\label{eqns:CAC_ABCA'}
\begin{align}
 &{\mathcal A}=A(u,\ou,\wu,\wou)=0,\label{eqn:CAC_A}\\
 &{\mathcal A}'=A'(v,\ov,\wv,\wov)=0,\label{eqn:CAC_AA}\\
 &{\mathcal B}=B(u,\wu,v,\wv)=0,\label{eqn:CAC_B}\\
 &\overline{{\mathcal B}}=B(\ou,\wou,\ov,\wov)=0,\label{eqn:CAC_BB}\\
 &{\mathcal C}=C(u,\ou,v,\ov)=0,\label{eqn:CAC_C}\\
 &\widetilde{{\mathcal C}}=C(\wu,\wou,\wv,\wov)=0.\label{eqn:CAC_CC}
\end{align}
\end{subequations}
The CAC and tetrahedron properties are defined as follows.
\begin{enumerate}
\item 
There are following three ways to calculate $\wov$ by using all the equations in the system \eqref{eqns:CAC_ABCA'} with $\{u,\ou,\wu,v\}$ as initial values.
\begin{enumerate}
\item 
Express $\wov$ as a rational function in terms of $\{v,\ov,\wv\}$ by using Equation \eqref{eqn:CAC_AA}.
Then, eliminate $\wv$ by using Equation \eqref{eqn:CAC_B} and $\ov$ by using Equation \eqref{eqn:CAC_C} from it.\\[-0.7em]
\item 
Express $\wov$ as a rational function in terms of $\{\ou,\wou,\ov\}$ by using Equation \eqref{eqn:CAC_BB}.
Then, eliminate $\wou$ by using Equation \eqref{eqn:CAC_A} and $\ov$ by using Equation \eqref{eqn:CAC_C} from it.\\[-0.7em]
\item 
Express $\wov$ as a rational function in terms of $\{\wu,\wou,\wv\}$ by using Equation \eqref{eqn:CAC_CC}.
Then, eliminate $\wou$ by using Equation \eqref{eqn:CAC_A} and $\wv$ by using Equation \eqref{eqn:CAC_B} from it.\\[-0.7em]
\end{enumerate}
When $\wov$ is uniquely determined as a rational function with the initial values $\{u,\ou,\wu,v\}$, then the system \eqref{eqns:CAC_PDE_ABCA'} is said to have the {\it CAC property} or said to be a {\it CAC-system}.
\item
Let the system \eqref{eqns:CAC_PDE_ABCA'} be a CAC-system.
When $\wov$ is represented as a rational function with the initial values $\{u,\ou,\wu,v\}$, the system \eqref{eqns:CAC_PDE_ABCA'} is said to have the {\it tetrahedron property} if the rational function does not depend on $u$.
\item
We say that the {\PDE} \eqref{eqn:CAC_PDE_A} (or the {\PDE} \eqref{eqn:CAC_PDE_AA}) has the CAC property if the system \eqref{eqns:CAC_PDE_ABCA'} has the CAC property. 
The pair of equations \eqref{eqn:CAC_PDE_B} and \eqref{eqn:CAC_PDE_C} is then referred to as a {\it CAC-type BT} from the {\PDE} \eqref{eqn:CAC_PDE_A} to the {\PDE} \eqref{eqn:CAC_PDE_AA} (or from the {\PDE} \eqref{eqn:CAC_PDE_AA} to the {\PDE} \eqref{eqn:CAC_PDE_A}).
When the equations \eqref{eqn:CAC_PDE_A} and \eqref{eqn:CAC_PDE_AA} are the same {\PDE}, the CAC-type BT \eqref{eqn:CAC_PDE_B} and \eqref{eqn:CAC_PDE_C} is specifically called a {\it CAC-type auto-BT} of the {\PDE} \eqref{eqn:CAC_PDE_A}.
\end{enumerate}

%%%%%%%%%%%%%%%%%%%%%%%%%%%%%%
%% Remark
%%%%%%%%%%%%%%%%%%%%%%%%%%%%%%
\begin{remark}
Even if Equation \eqref{eqn:CAC_PDE_A} has the CAC property, it is not necessarily integrable\cite{HietarintaJ2019:zbMATH07053246}.
Even if a Lax pair of Equation \eqref{eqn:CAC_PDE_A} is constructed from the CAC-type BT \eqref{eqn:CAC_PDE_B} and \eqref{eqn:CAC_PDE_C} by the method in \cite{BS2002:MR1890049,HJN2016:MR3587455,NijhoffFW2002:MR1912127,WalkerAJ:thesis}, it may be fake \cite{butler2015two} or weak \cite{HV2012:zbMATH06063057}.
From this perspective, in \cite{HietarintaJ2019:zbMATH07053246}, the CAC-type BT \eqref{eqn:CAC_PDE_B} and \eqref{eqn:CAC_PDE_C} are characterized as follows.
Calculate $\wov$ in the following two ways.
\begin{enumerate}
\item[{\rm (a)}]
Express $\wov$ as a rational function in terms of $\{\ou,\wou,\ov\}$ by using Equation \eqref{eqn:CAC_BB}.
Then, eliminate $\ov$ from it by using Equation \eqref{eqn:CAC_C}.
Finally, let $\wov=f_1(u,\ou,\wou,v)$ be the resulting relation.
\item[{\rm (b)}]
Express $\wov$ as a rational function in terms of $\{\wu,\wou,\wv\}$ by using Equation \eqref{eqn:CAC_CC}.
Then, eliminate $\wv$ from it by using Equation \eqref{eqn:CAC_B}.
Finally, let $\wov=f_2(u,\wu,\wou,v)$ be the resulting relation.
\end{enumerate}
If $f_1(u,\ou,\wou,v)=f_2(u,\wu,\wou,v)$ is automatically equal, then the CAC-type BT \eqref{eqn:CAC_PDE_B} and \eqref{eqn:CAC_PDE_C} is said to be {\it trivial}.
If only Equation \eqref{eqn:CAC_A} is obtained from $f_1(u,\ou,\wou,v)=f_2(u,\wu,\wou,v)$, then the CAC-type BT \eqref{eqn:CAC_PDE_B} and \eqref{eqn:CAC_PDE_C} is said to be {\it strong}.
If $f_1(u,\ou,\wou,v)=f_2(u,\wu,\wou,v)$ yields Equation \eqref{eqn:CAC_A} plus other \PDE s for $\{u,\ou,\wu,\wou\}$, then the CAC-type BT \eqref{eqn:CAC_PDE_B} and \eqref{eqn:CAC_PDE_C} is said to be {\it weak}.
A trivial BT gives a fake Lax pair, and a weak BT gives a weak Lax pair.
It is claimed in \cite{HietarintaJ2019:zbMATH07053246} that if the CAC-type BT is strong, then the integrability of \eqref{eqn:CAC_PDE_A} should be guaranteed.

Note that the CAC-type BTs given in Theorems \ref{thm:CAC1}, \ref{thm:CAC2} and \ref{thm:CAC3} are all strong.
\end{remark}
%%%%%%%%%%%%%%%%%%%%%%%%%%%%%%
%%%%%%%%%%%%%%%%%%%%%%%%%%%%%%

%%%%%%%%%%%%%%%%%%%%%%%%%%%%%%
%%% Figure 
%%%%%%%%%%%%%%%%%%%%%%%%%%%%%%
\begin{figure}[htbp]
\begin{center}
 \includegraphics[width=0.5\textwidth]{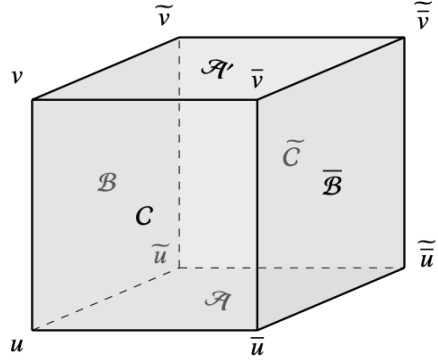}
\end{center}
\caption{A cube for the system \eqref{eqns:CAC_PDE_ABCA'}.
The $u$- and $v$-variables are assigned at the bottom and top vertices of the cube, respectively.
Also, the equations \eqref{eqns:CAC_ABCA'} are assigned to the faces of the cube.}
\label{fig:CACcube}
\end{figure}
%%%%%%%%%%%%%%%%%%%%%%%%%%%%%%
%%%%%%%%%%%%%%%%%%%%%%%%%%%%%%

%%%%%%%%%%%%%%%%%%%%%%%%%%%%%%%%%%%%%%%%%%%%%%%%%%%%%%%%%%%
%% A.2 The CABC property
%%%%%%%%%%%%%%%%%%%%%%%%%%%%%%%%%%%%%%%%%%%%%%%%%%%%%%%%%%%
\subsection{The CABC property}\label{subsection:CABC_def}
Let us consider the following system of \PDE s:
\begin{subequations}\label{eqns:CABC_PDE_ASBC}
\begin{align}
 &A(u,\ou,\wu,\wou)=0,\label{eqn:CABC_PDE_A}\\
 &S(u,\ou,\wv,\wov)=0,\label{eqn:CABC_PDE_S}\\
 &B(u,v,\wv)=0,\label{eqn:CABC_PDE_B}\\
 &C(u,\ou,v,\ov)=0,\label{eqn:CABC_PDE_C}
\end{align}
\end{subequations}
where $u=u_{l,m}$ and $v=v_{l,m}$.
Here, the functions $A$, $S$ and $C$ are irreducible multilinear polynomials in four variables.
The function $B=B(x,y,z)$ is a polynomial in three variables satisfying the following:
\begin{enumerate}
\item[1)]
$\deg_x B\geq 1$,\quad
$\deg_y B=\deg_z B=1$;
\item[2)]
Let $y=f(x,z)$ be the solution of $B=0$.
Then, $f(x,z)$ is a rational function that depends on $x$ and $z$, that is, 
the following hold:
\begin{equation}
 \dfrac{\partial}{\partial x} f(x,z)\neq0,\quad
 \dfrac{\partial}{\partial z} f(x,z)\neq0.
\end{equation}
\end{enumerate}
Assign the $u$- and $v$-variables on the sublattice \eqref{eqn:lattice_01} by the correspondence \eqref{eqn:lattice_uv}.
Then, by considering the cube consisting of the eight vertices \eqref{eqn:lattice_8points} (see Figure \ref{fig:CABCcube}), the system of equations around the cube obtained from the system \eqref{eqns:CABC_PDE_ASBC} is the following:
\begin{subequations}\label{eqns:CABC_ASBC}
\begin{align}
 &{\mathcal A}=A(u,\ou,\wu,\wou)=0,\label{eqn:CABC_A}\\
 &{\mathcal S}=S(u,\ou,\wv,\wov)=0,\label{eqn:CABC_S}\\
 &{\mathcal B}=B(u,v,\wv)=0,\label{eqn:CABC_B}\\
 &\overline{{\mathcal B}}=B(\ou,\ov,\wov)=0,\label{eqn:CABC_BB}\\
 &{\mathcal C}=C(u,\ou,v,\ov)=0,\label{eqn:CABC_C}\\
 &\widetilde{{\mathcal C}}=C(\wu,\wou,\wv,\wov)=0.\label{eqn:CABC_CC}
\end{align}
\end{subequations}

%%%%%%%%%%%%%%%%%%%%%%%%%%%%%%
%%% Figure 
%%%%%%%%%%%%%%%%%%%%%%%%%%%%%%
\begin{figure}[htbp]
\begin{center}
 \includegraphics[width=0.5\textwidth]{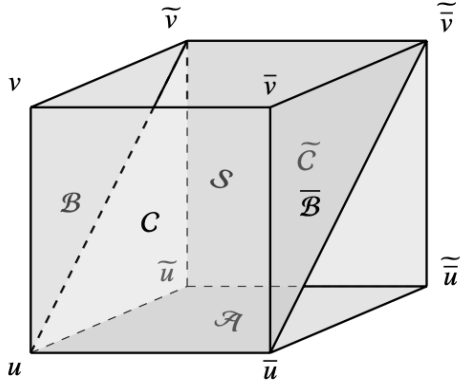}
\end{center}
\caption{A cube for the system \eqref{eqns:CABC_PDE_ASBC}.
The $u$- and $v$-variables are assigned at the bottom and top vertices of the cube, respectively.
Also, each of the equations \eqref{eqns:CABC_ASBC} is assigned to a face of the cube.
Note that the face for Equation \eqref{eqn:CABC_S} corresponds to the cutting plane bisecting the cube diagonally, and the faces for the equations \eqref{eqn:CABC_B} and \eqref{eqn:CABC_BB} correspond to the upper triangles.}
\label{fig:CABCcube}
\end{figure}
%%%%%%%%%%%%%%%%%%%%%%%%%%%%%%
%%%%%%%%%%%%%%%%%%%%%%%%%%%%%%

The CABC and tetrahedron properties are defined as follows.
\begin{enumerate}
\item 
There are following three ways to calculate $\wov$ by using all the equations in the system \eqref{eqns:CABC_ASBC} with $\{u,\ou,\wu,v\}$ as initial values.
\begin{enumerate}
\item 
Express $\wov$ as a rational function in terms of $\{u,\ou,\wv\}$ by using Equation \eqref{eqn:CABC_S}.
Then, eliminate $\wv$ from it by using Equation \eqref{eqn:CABC_B}.\\[-0.7em]
\item 
Express $\wov$ as a rational function in terms of $\{\ou,\ov\}$ by using Equation \eqref{eqn:CABC_BB}.
Then, eliminate $\ov$ from it by using Equation \eqref{eqn:CABC_C}.\\[-0.7em]
\item 
Express $\wov$ as a rational function in terms of $\{\wu,\wou,\wv\}$ by using Equation \eqref{eqn:CABC_CC}.
Then, eliminate $\wou$ by using Equation \eqref{eqn:CABC_A} and $\wv$ by using Equation \eqref{eqn:CABC_B} from it.\\[-0.7em]
\end{enumerate}
When $\wov$ is uniquely determined as a rational function with the initial values $\{u,\ou,\wu,v\}$, 
then the system \eqref{eqns:CABC_PDE_ASBC} is said to have the {\it CABC property} or said to be a {\it CABC-system};
the {\PDE} \eqref{eqn:CABC_PDE_A} is said to have the CABC property. 
\item 
Let the system \eqref{eqns:CABC_PDE_ASBC} be a CABC-system.
Then, the system \eqref{eqns:CABC_PDE_ASBC} is said to have the {\it tetrahedron property},
if the equations
\begin{equation}
 K_1(u,\ou,v,\wov)=0,\quad
 K_2(u,\ou,\ov,\wv)=0,
\end{equation}
where $K_1$ and $K_2$ are irreducible multilinear polynomials in four variables, can be obtained from the system \eqref{eqns:CABC_PDE_ASBC}.
\item 
If there exists a {\PDE} given only by $\{v,\ov,\wv,\wov\}$, then the tuple of equations \eqref{eqn:CABC_PDE_S}, \eqref{eqn:CABC_PDE_B} and \eqref{eqn:CABC_PDE_C} is referred to as a {\it CABC-type BT} from the {\PDE} \eqref{eqn:CABC_PDE_A} to the {\PDE} given by $\{v,\ov,\wv,\wov\}$.
\end{enumerate}

%%%%%%%%%%%%%%%%%%%%%%%%%%%%%%%%%%%%%%%%%%%%%%%%%%%%%%%%%%%
%% B. Lax pairs of the dKdV equation and the lsG equation
%%%%%%%%%%%%%%%%%%%%%%%%%%%%%%%%%%%%%%%%%%%%%%%%%%%%%%%%%%%
\section{Lax pairs of the dKdV equation and the lsG equation}\label{section:Lax}
In this appendix, we show Lax pairs of the dKdV equation \eqref{eqn:dKdV} and the lsG equation \eqref{eqn:lsG} obtained in this study.

%%%
%%% Lax pairs of the {\PDE} \eqref{eqn:dKdV}
%%%
\subsection{Lax pairs of the {\PDE} \eqref{eqn:dKdV}}
A Lax pair of the {\PDE} \eqref{eqn:dKdV} is given by
\begin{equation}\label{eqn:Lax_phi}
 \Phi_{l+1,m}=L_{l,m}\Phi_{l,m},\quad
 \Phi_{l,m+1}=M_{l,m}\Phi_{l,m},
\end{equation}
where the matrices $L_{l,m}$ and $M_{l,m}$ are given by one of the following pairs:
\begin{subequations}\label{eqns:Lax_dKdV}%方程式check ok
\begin{align}
 &L_{l,m}=
 \begin{pmatrix}
  0&\ga-\al_{l+1}\\1&\dfrac{\be_m-\al_l}{u}-\ou
 \end{pmatrix},\quad
  M_{l,m}=
 \begin{pmatrix}
 u&\ga-\al_l\\1&\dfrac{\be_m-\al_l}{u}
 \end{pmatrix},
 \label{eqn:Lax_dKdV_CABC}\\
 & L_{l,m}=
 \begin{pmatrix}
  \dfrac{{q_m}^2}{u}-\ou&-\dfrac{p_lq_m}{u}\\[0.5em]
  \dfrac{p_{l+1}q_m}{u}&-\dfrac{p_{l+1}p_l}{u}
 \end{pmatrix},\quad
 M_{l,m}=
 \begin{pmatrix}
  \dfrac{q_{m+1}q_m}{u}&-\dfrac{p_lq_{m+1}}{u}\\[0.5em]
  \dfrac{p_lq_m}{u}&\wu-\dfrac{{p_l}^2}{u}
 \end{pmatrix},
 \label{eqn:Lax_dKdV_CAC1}\\
  &L_{l,m}=
 \begin{pmatrix}
  \dfrac{p_{l+1}p_l}{u}&-\dfrac{p_{l+1}q_m}{u}\\[0.5em]
  \dfrac{p_lq_m}{u}&\ou-\dfrac{{q_m}^2}{u}
 \end{pmatrix},\quad
 M_{l,m}=
 \begin{pmatrix}
  \dfrac{{p_l}^2}{u}-\wu&-\dfrac{p_lq_m}{u}\\[0.5em]
  \dfrac{p_lq_{m+1}}{u}&-\dfrac{q_{m+1}q_m}{u}
 \end{pmatrix}.
 \label{eqn:Lax_dKdV_CAC2}
\end{align}
\end{subequations}
Here, $p_l$ and $q_m$ are given by Equation \eqref{eqn:pq_albe_ga} and $\ga\in\bbC$ is a spectral parameter.
Indeed, for each pair in \eqref{eqns:Lax_dKdV}, the compatibility condition
\begin{equation}
 L_{l,m+1}M_{l,m}=M_{l+1,m}L_{l,m}
\end{equation}
gives the {\PDE} \eqref{eqn:dKdV}.
%%%
%%%  A Lax pair of the {\PDE} \eqref{eqn:lsG}
%%%
\subsection{A Lax pair of the {\PDE} \eqref{eqn:lsG}}
The linear system 
\begin{subequations}\label{eqns:Lax_lsG}
\begin{equation}
 \Psi_{l+1,m}={\mathcal L}_{l,m}\Psi_{l,m},\quad
 \Psi_{l,m+1}={\mathcal M}_{l,m}\Psi_{l,m}
\end{equation}
where
\begin{align}%方程式check ok
 &{\mathcal L}_{l,m}=
 \begin{pmatrix}
 p_{l+1}-q_m\ox+\dfrac{{Q_m}^2\ox}{q_m-p_lx}&\dfrac{P_lQ_m\ox}{p_lx-q_m}\\[1em]
 \dfrac{P_{l+1}Q_m\ox}{q_m-p_lx}&\dfrac{P_lP_{l+1}\ox}{p_lx-q_m}
 \end{pmatrix},\\[1em]
 &{\mathcal M}_{l,m}=
 \begin{pmatrix}
  \dfrac{Q_mQ_{m+1}}{q_m-p_lx}&\dfrac{P_lQ_{m+1}}{p_lx-q_m}\\[1em]
  \dfrac{P_lQ_m}{q_m-p_lx}&\dfrac{{P_l}^2}{p_lx-q_m}+\dfrac{q_{m+1}\wx-p_l}{x}
 \end{pmatrix},
\end{align}
\end{subequations}
is a Lax pair of the {\PDE} \eqref{eqn:lsG}.
Here, $P_l$ and $Q_m$ are given by Equation \eqref{eqn:pq_PQ} and $\la\in\bbC$ is a spectral parameter.
Indeed, the compatibility condition
\begin{equation}
 {\mathcal L}_{l,m+1}{\mathcal M}_{l,m}={\mathcal M}_{l+1,m}{\mathcal L}_{l,m}
\end{equation}
gives the {\PDE} \eqref{eqn:lsG}.
%%%%%%%%%%%%%%%%%%%%%%%%%%%%%%%%%%%%%%%%%%%%%%%%%%%%%%%%%%%
%% C. Proof of Theorem \ref{theorem:tau}
%%%%%%%%%%%%%%%%%%%%%%%%%%%%%%%%%%%%%%%%%%%%%%%%%%%%%%%%%%%
\section{Proof of Theorem \ref{theorem:tau}}\label{section:proof_tau}
Define $\omega=\omega_{l,m,n}$ by
\begin{equation}%方程式check ok
 \omega=\dfrac{\,\htau\,}{\tau},
\end{equation}
where $\tau=\tau_{l,m,n}$ is a function satisfying the bilinear equations \eqref{eqns:3D_uvx_bilinear}.
Then, using Equation \eqref{eqn:3D_uvx_tau}, we can express the variables $u=u_{l,m,n}$, $v=v_{l,m,n}$ and $x=x_{l,m,n}$ in the $\omega$-variable as
\begin{subequations}\label{eqns:uvx_omega}%方程式check ok
\begin{align}
 &u=-\dfrac{\,\tau\,\wotau\,}{\otau\,\wtau}
 =\cfrac{p_{l,n}\dfrac{\,\hotau\,}{\otau}-q_{m,n}\dfrac{\,\hwtau\,}{\wtau}}{\dfrac{\,\htau\,}{\tau}}
 =\dfrac{p_{l,n}\oomega-q_{m,n}\womega}{\omega}
 \label{eqn:uvx_omega_u},\\
 &v=p_{l,n}\,\dfrac{\,\tau\,\hotau\,}{\otau\,\htau}=p_{l,n}\dfrac{\,\oomega\,}{\omega},\\
 &x=\dfrac{\,\hotau\,\wtau\,}{\otau\,\hwtau}=\dfrac{\,\oomega\,}{\womega}.
\end{align}
\end{subequations}
Note that we use Equation \eqref{eqn:3D_uvx_bilinear_2} for the deformation in Equation \eqref{eqn:uvx_omega_u}.
Moreover, the following lemma holds.
%%%%%%%%%%%%%%%%%%%%%%%%%%%%%%
%% Lemma
%%%%%%%%%%%%%%%%%%%%%%%%%%%%%%
\begin{lemma}
The following relations hold:
\begin{subequations}\label{eqns:omega}%方程式check ok
\begin{align}
 & \dfrac{\,\woomega\,}{\omega}=\dfrac{q_{m,n}\oomega-p_{l,n}\womega}{q_{m,n}\womega-p_{l,n}\oomega},
 \label{eqn:omega_1}\\
 &\dfrac{\omega\,\homega}{\oomega\,\womega}
 =\dfrac{p_{l,n+1}\hoomega-q_{m,n+1}\hwomega}{p_{l,n}\womega-q_{m,n}\oomega}.
 \label{eqn:omega_2}
\end{align}
\end{subequations}
\end{lemma}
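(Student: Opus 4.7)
The plan is to translate both sides of each claimed identity back to the underlying $\tau$-function and then recognise the resulting equalities as direct consequences of the bilinear equations \eqref{eqns:3D_uvx_bilinear} (together with their $n\to n+1$ shift). Under the definition $\omega=\htau/\tau$, applying the shift conventions of \S\ref{subsection:NT} gives
\begin{equation}
 \oomega=\dfrac{\hotau}{\otau},\quad
 \womega=\dfrac{\hwtau}{\wtau},\quad
 \woomega=\dfrac{\hwotau}{\wotau},\quad
 \homega=\dfrac{\widehat{\htau}}{\htau},\quad
 \hoomega=\dfrac{\widehat{\hotau}}{\hotau},\quad
 \hwomega=\dfrac{\widehat{\hwtau}}{\hwtau}.
\end{equation}
Substituting these expressions reduces \eqref{eqn:omega_1} and \eqref{eqn:omega_2} to bilinear identities among products of $\tau$-values, so the task is purely to exhibit these identities from \eqref{eqns:3D_uvx_bilinear}.

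For \eqref{eqn:omega_1}, I would solve \eqref{eqn:3D_uvx_bilinear_1} and \eqref{eqn:3D_uvx_bilinear_2} for the two mixed products to obtain
\begin{equation}
 \tau\,\hwotau=q_{m,n}\wtau\,\hotau-p_{l,n}\hwtau\,\otau,\qquad
 \htau\,\wotau=q_{m,n}\hwtau\,\otau-p_{l,n}\wtau\,\hotau.
\end{equation}
Dividing and then dividing top and bottom of the resulting fraction by $\wtau\,\otau$ produces exactly $(q_{m,n}\oomega-p_{l,n}\womega)/(q_{m,n}\womega-p_{l,n}\oomega)$, while the left-hand side $\tau\,\hwotau/(\htau\,\wotau)$ matches $\woomega/\omega$. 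This establishes the first identity.

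For \eqref{eqn:omega_2}, the denominator of the right-hand side may be combined over $\wtau\,\otau$, and \eqref{eqn:3D_uvx_bilinear_1} identifies the result as $-\tau\,\hwotau/(\wtau\,\otau)$. For the numerator, I would apply the $n$-shifted form of \eqref{eqn:3D_uvx_bilinear_2}, namely
\begin{equation}
 q_{m,n+1}\widehat{\hwtau}\,\hotau-p_{l,n+1}\hwtau\,\widehat{\hotau}-\widehat{\htau}\,\hwotau=0,
\end{equation}
which, after dividing by $\hotau\,\hwtau$, yields $p_{l,n+1}\hoomega-q_{m,n+1}\hwomega=-\widehat{\htau}\,\hwotau/(\hotau\,\hwtau)$. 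Taking the quotient, the factors $\hwotau$ cancel and the right-hand side collapses to $\widehat{\htau}\,\wtau\,\otau/(\tau\,\hotau\,\hwtau)$, which is precisely $\omega\,\homega/(\oomega\,\womega)$. The only real obstacle is the clerical one of keeping track of which hat, bar and tilde applies to which $\tau$ and of choosing the correct $n$-shift of the bilinear equations at each step; once the dictionary between $\omega$-ratios and $\tau$-quotients is in place, both identities are immediate.
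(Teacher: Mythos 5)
Your proof is correct and follows essentially the same route as the paper: both arguments rewrite the $\omega$-ratios as $\tau$-quotients, use \eqref{eqn:3D_uvx_bilinear_1} and \eqref{eqn:3D_uvx_bilinear_2} to identify the combinations $q_{m,n}\oomega-p_{l,n}\womega$ and $q_{m,n}\womega-p_{l,n}\oomega$ with $\tau\,\hwotau$ and $\htau\,\wotau$ for \eqref{eqn:omega_1}, and use \eqref{eqn:3D_uvx_bilinear_2}$_{n\to n+1}$ together with \eqref{eqn:3D_uvx_bilinear_1} to collapse the numerator and denominator of \eqref{eqn:omega_2}. The only difference is organizational (you form the ratios directly, whereas the paper computes the two sides separately and equates them), and all of your intermediate identities check out.
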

%%%%%%%%%%%%%%%%%%%%%%%%%%%%%%
% proof
%%%%%%%%%%%%%%%%%%%%%%%%%%%%%%
\begin{proof}
Using Equation \eqref{eqn:3D_uvx_bilinear_1}, we obtain
\begin{equation}%方程式check ok
 \woomega
 =\dfrac{\,\hwotau\,}{\wotau}
 =\dfrac{q_{m,n}\hotau~\wtau-p_{l,m}\otau~\hwtau}{\wotau~\tau},
\end{equation}
and using Equation \eqref{eqn:3D_uvx_bilinear_2}, we obtain
\begin{equation}%方程式check ok
 \dfrac{\omega(q_{m,n}\oomega-p_{l,m}\womega)}{q_{m,n}\womega-p_{l,m}\oomega}
 =\dfrac{\htau(q_{m,n}\hotau~\wtau-p_{l,m}\otau~\hwtau)}{\tau(q_{m,n}\otau~\hwtau-p_{l,m}\wtau~\hotau)}
 =\dfrac{q_{m,n}\hotau~\wtau-p_{l,m}\otau~\hwtau}{\wotau~\tau}.
\end{equation}
Equation \eqref{eqn:omega_1} follows from these two equations.
Moreover, using the equations \eqref{eqn:3D_uvx_bilinear_2}$_{n\to n+1}$ and \eqref{eqn:3D_uvx_bilinear_1} in turn, we obtain
\begin{align}%方程式check ok
 \dfrac{p_{l,n+1}\hoomega-q_{m,n+1}\hwomega}{\homega(p_{l,n}\womega-q_{m,n}\oomega)}
 &=\dfrac{\otau~\wtau~\htau(p_{l,n+1}\hwtau~\widehat{\hotau}-q_{m,n+1}\hotau~\widehat{\hwtau})}{\hotau~\hwtau~\widehat{\htau}(p_{l,n}\otau~\hwtau-q_{m,n}\wtau~\hotau)}
 =-\dfrac{\otau~\wtau~\htau~\hwotau}{\hotau~\hwtau(p_{l,n}\otau~\hwtau-q_{m,n}\wtau~\hotau)}\notag\\
 &=\dfrac{~\otau~\wtau~\htau~}{\tau~\hotau~\hwtau}
 =\dfrac{\omega}{\,\oomega\,\womega\,},
\end{align}
which is equivalent to Equation \eqref{eqn:omega_2}.
\end{proof}
%%%%%%%%%%%%%%%%%%%%%%%%%%%%%%
%%%%%%%%%%%%%%%%%%%%%%%%%%%%%%

%%%%%%%%%%%%%%%%%%%%%%%%%%%%%%
%% Remark
%%%%%%%%%%%%%%%%%%%%%%%%%%%%%%
\begin{remark}
Equation \eqref{eqn:omega_1} is known as the lattice modified KdV equation \cite{NC1995:MR1329559,NQC1983:MR719638,ABS2003:MR1962121}.
\end{remark}
%%%%%%%%%%%%%%%%%%%%%%%%%%%%%%
%%%%%%%%%%%%%%%%%%%%%%%%%%%%%%

Consider the system \eqref{eqns:3D_uvx} expressed in the $\omega$-variable by using the equations \eqref{eqns:uvx_omega}.
It is sufficient to show that the system \eqref{eqns:3D_uvx} expressed in the $\omega$-variable can be solved by the equations \eqref{eqns:omega}.
For simplicity, we omit the description of the specific equations written in the $\omega$-variable and itemize the method below by using only words.
\begin{itemize}
\item 
It is obvious that the equations \eqref{eqn:3D_uvx_uvB}, \eqref{eqn:3D_uvx_uvC}, \eqref{eqn:3D_uvx_xvB} and \eqref{eqn:3D_uvx_xvC} written in terms of the $\omega$-variable are equivalent to Equation \eqref{eqn:omega_1}.\\[-0.7em]
\item
Firstly eliminating $\widetilde{\overline{\oomega}}$ by using Equation \eqref{eqn:omega_1}$_{l\to l+1}$,
secondly eliminating $\widetilde{\woomega}$ by using Equation \eqref{eqn:omega_1}$_{m\to m+1}$,
and finally eliminating $\woomega$ by using Equation \eqref{eqn:omega_1},
we can show that the equations \eqref{eqn:3D_uvx_u} and \eqref{eqn:3D_uvx_x} written in terms of the $\omega$-variable hold.\\[-0.7em]
\item
Eliminating $\woomega$ by using Equation \eqref{eqn:omega_1} 
and then eliminating $\hoomega$ by using \eqref{eqn:omega_2},
we can show that Equation \eqref{eqn:3D_uvx_TuvB} written in terms of the $\omega$-variable holds.\\[-0.7em]
\item
Firstly eliminating $\widehat{\overline{\oomega}}$ by using Equation \eqref{eqn:omega_2}$_{l\to l+1}$,
secondly eliminating $\hwoomega$ by using Equation \eqref{eqn:omega_2}$_{m\to m+1}$,
thirdly eliminating $\hoomega$ by using Equation \eqref{eqn:omega_2},
and finally eliminating $\woomega$ by using Equation \eqref{eqn:omega_1},
we can show that Equation \eqref{eqn:3D_uvx_TuvC} written in terms of the $\omega$-variable holds.
\end{itemize}
Therefore, we have completed the proof of Theorem \ref{theorem:tau}.

%%%%%%%%%%%%%%%%%%%%%%%%%%%%%%%%%%%%%%%%%%%%%%%%%%%%%%%%%%%
%%% References 
%%%%%%%%%%%%%%%%%%%%%%%%%%%%%%%%%%%%%%%%%%%%%%%%%%%%%%%%%%%
\def\cprime{$'$} \def\cprime{$'$}

\end{document}